\definecolor{ptw}{rgb}{0.21,0.39,0.55}
\definecolor{vec}{rgb}{0.62,0.02,0.03}
\providecommand{\fma}{\ensuremath\mathop{\mathrm{fma}}}
\providecommand{\fmax}{\ensuremath\mathop{\mathrm{fmax}}}
\providecommand{\fmin}{\ensuremath\mathop{\mathrm{fmin}}}
\providecommand{\hypot}{\ensuremath\mathop{\mathrm{hypot}}}
\providecommand{\invsqrt}{\ensuremath\mathop{\mathrm{invsqrt}}}
\providecommand{\sign}{\ensuremath\mathop{\mathrm{sign}}}
\begin{document}
\markboth{V.~Novakovi\'{c}}{Batched computation of the SVDs of order two by the AVX-512 vectorization}
%
%
\catchline{}{}{}{}{}
%
%
\title{Batched computation of the singular value\\decompositions of order two by the AVX-512 vectorization}
\author{Vedran Novakovi\'{c}\footnote{%
e-mail address: \texttt{venovako@venovako.eu}\hfill\texttt{https://orcid.org/0000-0003-2964-9674}}}
\address{completed a major part of this research as a collaborator on the MFBDA project\\
  10000 Zagreb, Croatia}
\maketitle
\begin{history}
\received{(received date)}
\revised{(revised date)}
\end{history}
\begin{abstract}
  In this paper a vectorized algorithm for simultaneously computing up
  to eight singular value decompositions (SVDs, each of the form
  $A=U\Sigma V^{\ast}$) of real or complex matrices of order two is
  proposed.  The algorithm extends to a batch of matrices of an
  arbitrary length $n$, that arises, for example, in the annihilation
  part of the parallel Kogbetliantz algorithm for the SVD of a square
  matrix of order $2n$.  The SVD algorithm for a single matrix of
  order two is derived first.  It scales, in most instances
  error-free, the input matrix $A$ such that its singular values
  $\Sigma_{ii}$ cannot overflow whenever its elements are finite, and
  then computes the URV factorization of the scaled matrix, followed
  by the SVD of a non-negative upper-triangular middle factor.  A
  vector-friendly data layout for the batch is then introduced, where
  the same-indexed elements of each of the input and the output
  matrices form vectors, and the algorithm's steps over such vectors
  are described.  The vectorized approach is then shown to be about
  three times faster than processing each matrix in isolation, while
  slightly improving accuracy over the straightforward method for the
  $2\times 2$ SVD.
\end{abstract}
\keywords{batched computation; singular value decomposition; AVX-512 vectorization.}
\setcounter{footnote}{0}
\renewcommand{\thefootnote}{\alph{footnote}}
\newlength\fbw
\lstloadlanguages{C}
\lstset{language=C,extendedchars=false,numbers=left,numberstyle=\tiny,frame=lines,basicstyle=\small\ttfamily,columns=fullflexible,texcl=false,mathescape=true,lineskip=.25\baselineskip}
%
%
\section{Introduction}\label{s:1}
%
%
Let a finite sequence $\mathbf{A}=(\mathbf{A}^{[k]})_k$, where
$1\le k\le n$, of complex $2\times 2$ matrices be given, and let the
corresponding sequences $\mathbf{U}=(\mathbf{U}^{[k]})_k$,
$\mathbf{V}=(\mathbf{V}^{[k]})_k$ of $2\times 2$ unitary matrices be
sought for, as well as a sequence
$\mathbf{\Sigma}=(\mathbf{\Sigma}^{[k]})_k$ of $2\times 2$ diagonal
matrices with the real and non-negative diagonal elements, such that
$\mathbf{A}^{[k]}=\mathbf{U}^{[k]}\mathbf{\Sigma}^{[k]}(\mathbf{V}^{[k]})^{\ast}$,
i.e., for each $k$, the right hand side of the equation is the
singular value decomposition (SVD) of the left hand side.  This batch
of $2\times 2$ SVD computational tasks arises naturally in, e.g.,
parallelization of the Kogbetliantz algorithm~\cite{Kogbetliantz-55}
for the $2n\times 2n$
SVD~\cite{Becka-Oksa-Vajtersic-02,Oksa-et-al-19,Novakovic-Singer-20}.
A parallel step of the algorithm, repeated until convergence, amounts
to forming and processing such a batch, with each $\mathbf{A}^{[k]}$
assembled column by column from the elements of the iteration matrix
at the suitably chosen pivot positions $(p_k,p_k)$, $(q_k,p_k)$,
$(p_k,q_k)$, and $(q_k,q_k)$.  The iteration matrix is then updated
from the left by $(\mathbf{U}^{[k]})^{\ast}$ and from the right by
$\mathbf{V}^{[k]}$, transforming the $p_k$th and the $q_k$th rows and
columns, respectively, while annihilating the off-diagonal pivot
positions.

For each $k$, the matrices $\mathbf{A}^{[k]}$, $\mathbf{U}^{[k]}$,
$\mathbf{V}^{[k]}$, and $\mathbf{\Sigma}^{[k]}$ have the following
elements,
\begin{displaymath}
  \mathbf{A}^{[k]}=
  \begin{bmatrix}
    a_{11}^{[k]} & a_{12}^{[k]}\\
    a_{21}^{[k]} & a_{22}^{[k]}
  \end{bmatrix},\hfill
  \mathbf{U}^{[k]}=
  \begin{bmatrix}
    u_{11}^{[k]} & u_{12}^{[k]}\\
    u_{21}^{[k]} & u_{22}^{[k]}
  \end{bmatrix},\hfill
  \mathbf{V}^{[k]}=
  \begin{bmatrix}
    v_{11}^{[k]} & v_{12}^{[k]}\\
    v_{21}^{[k]} & v_{22}^{[k]}
  \end{bmatrix},\hfill
  \mathbf{\Sigma}^{[k]}=
  \begin{bmatrix}
    \sigma_{\max}^{[k]} & 0\\
    0 & \sigma_{\min}^{[k]}
  \end{bmatrix},
\end{displaymath}
\addtocounter{equation}{-1}
where $\sigma_{\max}^{[k]}\ge\sigma_{\min}^{[k]}\ge 0$.  When its
actual index $k$ is either implied or irrelevant, $\mathbf{A}^{[k]}$
is denoted by $A$.  Similarly, $U$, $V$, and $\Sigma$ denote
$\mathbf{U}^{[k]}$, $\mathbf{V}^{[k]}$, and $\mathbf{\Sigma}^{[k]}$,
respectively, in such a case, and the bracketed indices of the
particular elements are also omitted.

When computing in the machine's floating-point arithmetic, the real
and the imaginary parts of the input elements are assumed to be
rounded to finite (i.e., excluding $\pm\infty$ and \texttt{NaN})
\texttt{double} precision quantities, but the SVD computations can
similarly be vectorized in single precision (\texttt{float} datatype
in the C language~\cite{C-18}).

Let \texttt{C} and \texttt{W} denote the CPU's cache line size and the
maximal SIMD width, both expressed in bytes, respectively.  For an
Intel CPU with the 512-bit Advanced Vector Extensions Foundation
(AVX-512F) instruction set~\cite{Intel-19},
$\mathtt{C}=\mathtt{W}=64$.  Let \texttt{B} be the size in bytes of
the chosen underlying datatype \texttt{T} (here,
$\mathtt{T}=\mathtt{double}$ in the real and the complex case alike,
so $\mathtt{B}=8$), and let $\mathtt{S}=\mathtt{W}/\mathtt{B}=8$.  If
$n\bmod\mathtt{S}\ne 0$, let
$\hat{n}=n+(\mathtt{S}-(n\bmod\mathtt{S}))$, else let $\hat{n}=n$.

This paper aims to show how to single-threadedly compute as many SVDs
at the same time as there are the SIMD/vector lanes available
(\texttt{S}), one SVD by each lane.  Furthermore, these vectorized
computations can execute concurrently on the non-overlapping batch
chunks assigned to the multiple CPU cores.

Techniques similar to the ones proposed in this paper have already
been applied in~\cite{Singer-DiNapoli-Novakovic-Caklovic-19} for
vectorization of the Hari--Zimmermann joint diagonalizations of a
complex positive definite pair of matrices~\cite{Hari-19} of order
two, and could be, as future work, for the real variant of the
Hari--Zimmermann algorithm for the generalized
eigendecomposition~\cite{Hari-18} and the generalized
SVD~\cite{Novakovic-Singer-Singer-15}.  Those efforts do not use the C
compiler intrinsics, but rely instead on the vectorization capabilities
of the Intel Fortran compiler over data laid out in a vector-friendly
fashion similar to the one described in section~\ref{s:3}.  Simple as
it may seem, it is also a more fragile way of expressing the vector
operations, should the compiler ever renegade on the present behavior
of its autovectorizer.  The intrinsics approach has been tried
in~\cite{Novakovic-17} with 256-bit-wide vectors of the
AVX2+FMA~\cite{Intel-19} instruction set, alongside AVX-512F, for
vectorization of the eigendecompositions of symmetric matrices of
order two by the Jacobi rotations computed similarly
to~\cite{Drmac-97}.  This way the one-sided Jacobi SVD (and,
similarly, the hyperbolic SVD) of real $n\times n$ matrices can be
significantly sped up when $n$ is small enough to make the
eigendecompositions' execution time comparable to the $2\times 2$
Grammian formations and the column updates, e.g., when the targeted
matrices are the block pivots formed in a block-Jacobi
algorithm~\cite{Hari-SingerSanja-SingerSasa-10,Hari-SingerSanja-SingerSasa-14}.

In numerical linear algebra the term ``batched computation'' is
well-established, signifying a simultaneous processing of a large
quantity of relatively small problems, e.g., the LU and the Cholesky
factorizations~\cite{Haidar-et-al-18} and the corresponding linear
system solving~\cite{Dongarra-et-al-18} on the GPUs, with appropriate
data layouts.  It is therefore both justifiable and convenient to
reuse the term in the present context.

\looseness=-1
This paper is organized as follows.  A non-vectorized Kogbetliantz
method for the SVD of a matrix of order two is presented in
section~\ref{s:2}.  In section~\ref{s:3} a vector-friendly data layout
is proposed, followed by a summary of the vectorized algorithm for the
batched $2\times 2$ SVDs in section~\ref{s:4}.  The algorithm
comprises the following phases:
\begin{compactenum}[1.]
  \item scaling the input matrices $\mathbf{A}$, each by a suitable
    power of two, to avoid any overflows and most underflows,
    vectorized in section~\ref{s:5} and based on
    subsection~\ref{ss:2.1},
  \item the URV factorizations~\cite{Stewart-92}, vectorized in
    section~\ref{s:6} and based on subsection~\ref{ss:2.2}, of the
    matrices from the previous phase, into the real, non-negative,
    upper-triangular middle factors $\mathbf{R}$ and the unitary left
    and right factors,
  \item the singular value decompositions of the factors $\mathbf{R}$
    from the previous phase, vectorized in section~\ref{s:7} and based
    on subsection~\ref{ss:2.3}, yielding the scaled $\mathbf{\Sigma}$,
    and
  \item assembling of the left ($\mathbf{U}$) and the right
    ($\mathbf{V}$) singular vectors of the matrices $\mathbf{A}$.
\end{compactenum}
The numerical testing results follow in section~\ref{s:8}, and the
conclusions in section~\ref{s:9}.
%
%
\section{The Kogbetliantz algorithm for the SVD of order two}\label{s:2}
%
%
The pointwise, non-vectorized Kogbetliantz algorithm for the SVD of a
matrix of order two has been an active subject of
research~\cite{Charlier-Vanbegin-VanDooren-87,Hari-Matejas-09,Matejas-Hari-15},
and has been implemented for real matrices in
LAPACK's~\cite{Anderson-et-al-99} \texttt{xLASV2} (for the full SVD)
and \texttt{xLAS2} (for the singular values only) routines, where
$\mathtt{x}\in\{\mathtt{S},\mathtt{D}\}$.  Here a simplified version
of the algorithm from~\cite[trigonometric case]{Novakovic-Singer-20}
is described, with an early reduction of a complex matrix to the real
one that is partly influenced by, but improves on,~\cite{Qiao-Wang-02}.

It is assumed in the paper that the floating-point
arithmetic~\cite{IEEE-754-2008} is nonstop, i.e., does not trap on
exceptions, and has the gradual underflow, i.e.,
Flush-denormals-To-Zero (FTZ) and Denormals-Are-Zero (DAZ) processor
flags~\cite{Intel-19} are disabled.

To compute $|z|$ and $e^{\mathrm{i}\arg{z}}$ for a complex $z$ with
both components finite, including $z=0$, while avoiding the complex
arithmetic operations, use the $\hypot(a,b)=\sqrt{a^2+b^2}$
function~\cite{C-18}.  With \texttt{DBL\_TRUE\_MIN} being the smallest
positive non-zero (and thus subnormal, or denormal in the old
parlance) double precision value, let
\begin{equation}
  \begin{aligned}
    |z|&=\hypot(\Re{z},\Im{z}),\quad
    e^{\mathrm{i}\arg{z}}=\cos(\arg{z})+\mathrm{i}\cdot\sin(\arg{z}),\\
    \cos(\arg{z})&=\fmin\left(\frac{|\Re{z}|}{|z|},1\right)\cdot\sign{\Re{z}},\quad
    \sin(\arg{z})=\frac{\Im{z}}{\max(|z|,\mathtt{DBL\_TRUE\_MIN})}.
  \end{aligned}
  \label{e:1}
\end{equation}
Here and in the following, $\fmin$ and $\fmax$ are the functions
similar to the ones in the C language~\cite{C-18}, but with a bit
relaxed semantics, that return the minimal (respectively, maximal) of
their two non-\texttt{NaN} arguments, or the second argument if the
first is a \texttt{NaN}, as it is the case with the vector minimum and
maximum~\cite[\texttt{VMINPD} and \texttt{VMAXPD}]{Intel-19}.  See
also~\cite[subsection~6.2]{Singer-DiNapoli-Novakovic-Caklovic-19} for
a similar exploitation of the \texttt{NaN} handling of $\min$ and
$\max$ operations.  It now follows that, when $|z|=0$, and so
$\Re{z}=\Im{z}=0$,
\begin{displaymath}
  \cos(\arg{z})=\sign{\Re{z}}=\pm 1,\quad
  \sin(\arg{z})=0\cdot\sign{\Im{z}}=\pm 0.
\end{displaymath}
\addtocounter{equation}{-1}
The signs of $\Re{z}$ and $\Im{z}$ are thus preserved in
$\cos(\arg{z})$ and $\sin(\arg{z})$, respectively.

A mandatory property of $\hypot$ for~\eqref{e:1} to be applicable to
all inputs $z$, where $\Re{z}$ and $\Im{z}$ are of sufficiently small
magnitude (see subsection~\ref{ss:2.1}), is to have
\begin{displaymath}
  \hypot(a,b)=0\iff a=b=0.
\end{displaymath}
\addtocounter{equation}{-1}

A vectorized $\hypot$ implementation is accessible from the Intel
Short Vector Math Library (SVML) via a compiler intrinsic, as well as
it is a reciprocal square root ($\invsqrt{x}=1/\sqrt{x}$) vectorized
routine, helpful for the cosine calculations in~\eqref{e:7},
\eqref{e:15}, and \eqref{e:16}, though neither is always correctly
rounded to at most half
ulp\footnote{Consult the reports on the High Accuracy functions at
  \url{https://software.intel.com/content/www/us/en/develop/documentation/mkl-vmperfdata/top/real-functions/root.html}
URL\@.}.
%
%
\subsection{Exact scalings of the input matrix}\label{ss:2.1}
%
%
Even if both components of $z$ are finite, $|z|$ from~\eqref{e:1} can
overflow, but $|2^{-1}z|$ cannot.  Scaling a floating-point number by
an integer power of two is exact, except when the significand of a
subnormal result loses a trailing non-zero part due to shifting of the
original significand to the left, or when the result overflows.
Therefore, such scaling~\cite[\texttt{VSCALEFPD}]{Intel-19} is the
best remedy for the absolute value overflow problem.

Let the exponent of a floating-point value (assuming the radix two) be
defined as $\exp_2{0}=-\infty$ and $\exp_2{a}=\lfloor\lg|a|\rfloor$
for a finite non-zero $a$ (see~\cite[\texttt{VGETEXPPD}]{Intel-19}).
Let $h=\mathtt{DBL\_MAX\_EXP}-3$ be two less than the largest
exponent of a finite double precision number.  To find a scaling
factor $2^s$ for $A$, take $s$ as
\begin{equation}
  s=\min\{\mathtt{DBL\_MAX},\min{E^{\Re}},\min{E^{\Im}}\},
  \label{e:2}
\end{equation}
where $E^{\Re}=\{E_{ij}^{\Re}\mid 1\le i,j\le 2\}$ and
$E^{\Im}=\{E_{ij}^{\Im}\mid 1\le i,j\le 2\}$ are computed as
\begin{displaymath}
  E_{ij}^{\Re}=h-\exp_2^{}\Re{a_{ij}^{}},\quad
  E_{ij}^{\Im}=h-\exp_2^{}\Im{a_{ij}^{}}.
\end{displaymath}
\addtocounter{equation}{-1}
Note that $-2\le s\le\mathtt{DBL\_MAX}$, due to the definition of $h$.
If $A$ is real, $E^{\Im}$ is not used.
The upper bound on $s$ is required to be finite, since
$0\cdot 2^{\infty}$ would result in a \texttt{NaN}.

If there is a value of a huge magnitude (i.e., with its exponent
greater than $h$) in $A$, $s$ from~\eqref{e:2} will be negative and
the huge values will decrease, either twofold or fourfold.  Else, $s$
will be the maximal non-negative amount by which the exponents of the
values in $A$ can jointly be increased, thus taking the very small
values out of the subnormal range if possible, without any of the new
exponents going over $h$.

Let $\widehat{A}=2^s A$, and let $\hat{a}_j$ denote the $j$th column
of $\widehat{A}$.  The Frobenius norm of $\hat{a}_j$,
\begin{equation}
  \|\hat{a}_j\|_F=\hypot(|\hat{a}_{1j}|,|\hat{a}_{2j}|).
  \label{e:3}
\end{equation}
cannot overflow (see~\eqref{e:27} in the proof of Theorem~\ref{t:1} in
subsection~\ref{ss:2.3}).

This scaling is both a simplification and an improvement
of~\cite[subsection~2.3.2]{Novakovic-Singer-20}, which guarantees that
the computed scaled singular values are finite, while avoiding any
branching, lane masking, or recomputing when vectorized, with the only
adverse effect being a potential sacrifice of the tiniest subnormal
values in the presence of a huge one (i.e., with its exponent strictly
greater than $h$) in $A$.
%
%
\subsection{The URV factorization of a well-scaled matrix}\label{ss:2.2}
%
%
If $\|\hat{a}_1\|_F<\|\hat{a}_2\|_F$, let
$P_c=[\begin{smallmatrix}0&1\\1&0\end{smallmatrix}]$, else let
$P_c=[\begin{smallmatrix}1&0\\0&1\end{smallmatrix}]$.  Denote the
column-pivoted $\widehat{A}$ by $A'=\widehat{A}P_c$.  If
$|a_{11}'|<|a_{21}'|$, let
$P_r^{\ast}=[\begin{smallmatrix}0&1\\1&0\end{smallmatrix}]$, else let
$P_r^{\ast}=[\begin{smallmatrix}1&0\\0&1\end{smallmatrix}]$.  Denote
the row-sorted $A'$ by $A''=P_r^{\ast} A'$.  To make $a_1''$ real and
non-negative, let
\begin{equation}
  D^{\ast}=
  \begin{bmatrix}
    e^{-\mathrm{i}\arg{a_{11}''}} & 0\\
    0 & e^{-\mathrm{i}\arg{a_{21}''}}
  \end{bmatrix},\quad
  A'''=D^{\ast} A''.
  \label{e:4}
\end{equation}

Complex multiplication, required in~\eqref{e:4}, \eqref{e:8},
\eqref{e:9}, and \eqref{e:11}, is performed using the fused
multiply-add operations with a single rounding~\cite{C-18},
$\fma(a,b,c)=a\cdot b+c$, as in~\cite[\texttt{cuComplex.h}]{NVidia-19}
and~\cite[subsection~3.2.1]{Novakovic-Singer-19}, i.e., for a complex
$c=a\cdot b$ holds
\begin{equation}
  \Re{c}=\fma(\Re{a},\Re{b},-\Im{a}\cdot\Im{b}),\quad
  \Im{c}=\fma(\Re{a},\Im{b},\Im{a}\cdot\Re{b}).
  \label{e:5}
\end{equation}

To annihilate $a_{21}'''$, compute the Givens rotation
$Q_{\alpha}^{\ast}$, where $-\pi/4\le\alpha\le 0$, as 
\begin{equation}
  Q_{\alpha}^{\ast}=\cos\alpha
  \begin{bmatrix}
    1 & -\tan\alpha\\
    \tan\alpha & 1
  \end{bmatrix},\quad
  R''=Q_{\alpha}^{\ast}
  \begin{bmatrix}
    a_1''' & a_2'''
  \end{bmatrix}=
  \begin{bmatrix}
    r_{11}^{} & r_{12}'\\
    0 & r_{22}''
  \end{bmatrix},\quad
  r_{11}=\|a_1'''\|_F.
  \label{e:6}
\end{equation}
Since the column norms of a well-scaled $\widehat{A}$ are finite, its
column-pivoted, row-sorted QR factorization in~\eqref{e:6} cannot
result in an infinite element in $R''$.

If $a_{11}'''=0$ then $A'''=0$ as a special case.  Handling special
cases in a vectorized way is difficult as it implies branching or
using the instructions with a lane mask.  However, $\fmax$ function
aids in avoiding both of these approaches similarly to $\fmin$
in~\eqref{e:1}, since $\tan\alpha$ and $\cos\alpha$ from~\eqref{e:6}
can be computed as
\begin{equation}
  \tan\alpha=-\fmax(a_{21}'''/a_{11}''',0),\quad
  \cos\alpha=\invsqrt(\fma(\tan\alpha,\tan\alpha,1)).
  \label{e:7}
\end{equation}

To make $r_{12}'$ from~\eqref{e:5} real (see~\cite{Qiao-Wang-02}) and
non-negative, take $\widetilde{D}$ and obtain $R'$ as
\begin{equation}
  \widetilde{D}=
  \begin{bmatrix}
    1 & 0\\
    0 & e^{-\mathrm{i}\arg{r_{12}'}}
  \end{bmatrix},\quad
  R'=R''\widetilde{D}=
  \begin{bmatrix}
    r_{11}^{} & r_{12}^{}\\
    0 & r_{22}'
  \end{bmatrix},\quad
  r_{12}^{}\ge 0.
  \label{e:8}
\end{equation}
Similarly, to make $r_{22}'$ from~\eqref{e:8} real and non-negative,
take $\widehat{D}^{\ast}$ and obtain $R$ as
\begin{equation}
  \widehat{D}^{\ast}=
  \begin{bmatrix}
    1 & 0\\
    0 & e^{-\mathrm{i}\arg{r_{22}'}}
  \end{bmatrix},\quad
  R=\widehat{D}^{\ast}R'=
  \begin{bmatrix}
    r_{11}^{} & r_{12}^{}\\
    0 & r_{22}^{}
  \end{bmatrix},\quad
  r_{11}^{}\ge\max\{r_{12}^{},r_{22}^{}\},
  \label{e:9}
\end{equation}
due to the column pivoting.  Specifically, if $A$ is already real,
then
\begin{equation}
  D^{\ast}=
  \begin{bmatrix}
    \sign{a_{11}''} & 0\\
    0 & \sign{a_{21}''}
  \end{bmatrix},\quad
  \widetilde{D}=
  \begin{bmatrix}
    1 & 0\\
    0 & \sign{r_{12}'}
  \end{bmatrix},\quad
  \widehat{D}^{\ast}=
  \begin{bmatrix}
    1 & 0\\
    0 & \sign{r_{22}'}
  \end{bmatrix}.
  \label{e:10}
\end{equation}

Note that, from~\eqref{e:4}, \eqref{e:6}, \eqref{e:8}, and
\eqref{e:9},
\begin{equation}
  R=U_+^{\ast}\widehat{A}V_+^{},\quad
  U_+^{\ast}=\widehat{D}^{\ast}Q_{\alpha}^{\ast}D^{\ast}P_r^{\ast},\quad
  V_+^{}=P_c^{}\widetilde{D},
  \label{e:11}
\end{equation}
where $U_+^{\ast}$ and $V_+^{}$ are unitary, i.e.,
$U_+^{} R V_+^{\ast}$ is a specific URV
factorization~\cite{Stewart-92} of $\widehat{A}$.
%
%
\subsection{The SVD of a special upper-triangular non-negative matrix}\label{ss:2.3}
%
%
Here the plane rotations $U_{\varphi}^{\ast}$ and $V_{\psi}^{}$ are
computed, such that $U_{\varphi}^{\ast} R V_{\psi}^{}=\Sigma'$, where
\begin{equation}
  U_{\varphi}^{\ast}=\cos\varphi
  \begin{bmatrix}
    1 & -\tan\varphi\\
    \tan\varphi & 1
  \end{bmatrix},\quad
  V_{\psi}^{}=\cos\psi
  \begin{bmatrix}
    1 & \tan\psi\\
    -\tan\psi & 1
  \end{bmatrix},\quad
  \Sigma'=
  \begin{bmatrix}
    \sigma_{11}' & 0\\
    0 & \sigma_{22}'
  \end{bmatrix},
  \label{e:12}
\end{equation}
with $R$ from~\eqref{e:9} and $\min\{\sigma_{11}',\sigma_{22}'\}\ge 0$.

Let, as in~\cite[subsection~2.2.1]{Novakovic-Singer-20}, where the
following formulas have been derived,
\begin{equation}
  x=\fmax(r_{12}/r_{11},0),\quad
  y=\fmax(r_{22}/r_{11},0).
  \label{e:13}
\end{equation}
With $x$ and $y$ from~\eqref{e:13}, $0\le x,y\le 1$, compute
\begin{equation}
  \tan(2\varphi)=-\min\left\{\fmax\left(\frac{(2\min(x,y))\max(x,y)}{\fma(x-y,x+y,1)},0\right),\sqrt{\mathtt{DBL\_MAX}}\right\},
  \label{e:14}
\end{equation}
as justified in the next paragraph.

Since the quotient in~\eqref{e:14} is non-negative (when defined),
$\tan(2\varphi)$ is non-positive, and thus $-\pi/4\le\varphi\le 0$.
From $\tan(2\varphi)$ compute
\begin{equation}
  \tan\varphi=\frac{\tan(2\varphi)}{1+\sqrt{\fma(\tan(2\varphi),\tan(2\varphi),1)}},\quad
  \cos\varphi=\invsqrt(\sec^2\varphi),
  \label{e:15}
\end{equation}
with $-1\le\tan\varphi\le 0$ and
$\sec^2\varphi=\fma(\tan\varphi,\tan\varphi,1)$.  Assume that
$\tan(2\varphi)$ was not bounded in magnitude.  If
$|\tan(2\varphi)|=\infty$ in floating-point (this occurs rarely, when
$x>0$, $y=1$, and $x\pm y=\pm 1$), then $\tan\varphi=\mathtt{NaN}$
instead of the correct result, $-1$.  Else, if
$|\tan(2\varphi)|>\sqrt{\mathtt{DBL\_MAX}}$, adding one to its square
would have made little difference before the rounding (and the sum
would have overflown after it), so the square root in~\eqref{e:15}
could be approximated by $|\tan(2\varphi)|$.  Again, with
$|\tan(2\varphi)|$ so obtained, adding one to it in the denominator
in~\eqref{e:15} would have been irrelevant, and $\tan\varphi$ would
have then equaled to $-1$.  Bounding $|\tan(2\varphi)|$ from above as
in~\eqref{e:14} therefore avoids the argument of the square root
overflowing (so using $\hypot(\tan(2\varphi),1)$ instead of
$\sqrt{\fma(\tan(2\varphi),\tan(2\varphi),1)}$ is not required), and
ensures $\tan\varphi=-1$ for all $\tan(2\varphi)$ that would otherwise
be greater than the bound.

Having thus computed $U_{\varphi}$, the right plane rotation
$V_{\psi}$ is constructed from
\begin{equation}
  \tan\psi=\fma(y,\tan\varphi,-x),\quad
  \cos\psi=\invsqrt(\sec^2\psi),
  \label{e:16}
\end{equation}
where $\tan\psi\le 0$ and $\sec^2\psi=\fma(\tan\psi,\tan\psi,1)$.

The following Theorem~\ref{t:1} shows that the special form of $R$
contributes to an important property of the computed scaled singular
values $\Sigma'$; namely, they are already sorted non-ascendingly, and
thus never have to be swapped in a postprocessing step.  Also, the
scaled singular values are always finite in floating-point arithmetic.
\begin{theorem}\label{t:1}
  For $\Sigma'$ it holds $\infty>\sigma_{11}'\ge\sigma_{22}'\ge 0$,
  where
  \begin{equation}
    \sigma_{11}'=(\cos\varphi\cos\psi\sec^2\psi)r_{11}^{},\quad
    \sigma_{22}'=(\cos\varphi\cos\psi\sec^2\varphi)r_{22}^{},
    \label{e:17}
  \end{equation}
  and $\sqrt{2}\ge|\tan\psi|\ge|\tan\varphi|\ge 0$.
\end{theorem}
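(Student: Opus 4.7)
The plan is to verify~\eqref{e:17} by direct expansion of $U_\varphi^{\ast}RV_\psi$, and then to deduce the sign, ordering and finiteness claims from elementary inequalities together with the column-pivoting bound $r_{11}^2\geq r_{12}^2+r_{22}^2$, equivalently $x^2+y^2\leq 1$, that the QR step of subsection~\ref{ss:2.2} supplies.

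First I would multiply $U_\varphi^{\ast}RV_\psi$ out using~\eqref{e:12} and~\eqref{e:9}. Setting the $(1,2)$ and $(2,1)$ off-diagonals to zero yields the two identities $\tan\psi=y\tan\varphi-x$ (which is~\eqref{e:16}) and $\tan\psi=\tan\varphi/(x\tan\varphi+y)$; equating them produces a quadratic in $\tan\varphi$ whose negative root, combined with the half-angle identity $\tan(2\varphi)=2\tan\varphi/(1-\tan^2\varphi)$, reproduces~\eqref{e:14}. Substituting $r_{12}-r_{22}\tan\varphi=-r_{11}\tan\psi$ back into the $(1,1)$ entry collapses it to $r_{11}\sec^2\psi$, giving the first half of~\eqref{e:17}; the matching form for $\sigma_{22}'$ follows most cheaply from $\sigma_{11}'\sigma_{22}'=\det R=r_{11}r_{22}$.

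Nonnegativity of $\sigma_{ii}'$ is then immediate from $r_{11},r_{22}\geq 0$, $\sec^2>0$, and $\cos\varphi,\cos\psi>0$ (the latter once the $|\tan\psi|$ bound below is in place). The ordering $\sigma_{11}'\geq\sigma_{22}'$ reduces, after cancellation in~\eqref{e:17}, to $r_{11}(1+\tan^2\psi)\geq r_{22}(1+\tan^2\varphi)$, and given $r_{11}\geq r_{22}$ from~\eqref{e:9} it suffices to prove the key inequality $|\tan\psi|\geq|\tan\varphi|$. The second expression for $\tan\psi$ reduces that inequality to $|x\tan\varphi+y|\leq 1$, which a short case split on the sign of $x\tan\varphi+y$ settles using only $x,y\in[0,1]$ and $\tan\varphi\in[-1,0]$; the degeneracy $\tan\varphi=0$ is handled separately via $\tan\psi=-x$.

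For the $\sqrt{2}$ bound and finiteness I would combine~\eqref{e:16} with $|\tan\varphi|\leq 1$ to get $|\tan\psi|\leq x+y$; the column-pivoting bound $x^2+y^2\leq 1$ and Cauchy--Schwarz then give $x+y\leq\sqrt{2}$. Hence $\sec^2\psi\leq 3$ and $\sigma_{11}'=r_{11}\cos\varphi\sec\psi<\sqrt{3}\,r_{11}$, while the scaling of subsection~\ref{ss:2.1} caps each $|\hat a_{ij}|<2^{h+1}$ with $h=\mathtt{DBL\_MAX\_EXP}-3$, so $\hypot$ gives $r_{11}\leq\sqrt{2}\cdot 2^{h+1}$ and consequently $\sigma_{11}'<\mathtt{DBL\_MAX}$ with room to spare; this is the bound the preceding subsection points to as its equation (27).

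The main obstacle is dealing with the clamp branches of~\eqref{e:14}--\eqref{e:15} and with the corner configurations ($x=0$, $y=1$, or $xy=0$), where the diagonalization quadratic becomes formally indeterminate: one must check that the $\fmax(\cdot,0)$ defaults and the cap at $\sqrt{\mathtt{DBL\_MAX}}$ all pick the correct limiting value $\tan\varphi\in[-1,0]$, so that the exact-arithmetic derivation above really describes what the formulas compute. Everything else is a short case split plus Cauchy--Schwarz.
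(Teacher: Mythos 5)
Your proposal is correct, and its central step differs from the paper's in a way worth noting. Both arguments normalize to the matrix identity~\eqref{e:18} with $x,y$ from~\eqref{e:13}, read off $\sigma_{11}''=\sec^2\psi$ from the $(1,1)$ entry, and prove the $\sqrt{2}$ bound and finiteness the same way (via $|\tan\psi|\le x+y\le\sqrt{2}$ from $x^2+y^2\le 1$, hence $\cos\psi\ge 1/\sqrt{3}$ and $\sigma_{11}'\le\sqrt{3}\,r_{11}$ with $r_{11}$ capped by the scaling). Where you diverge is the key inequality $|\tan\psi|\ge|\tan\varphi|$: the paper introduces $a=-2xy$, $b=1+x^2-y^2$, $c=b+\sqrt{a^2+b^2}$, computes the ratio $|\tan\psi|/|\tan\varphi|=(1+x^2+y^2+\sqrt{a^2+b^2})/(2y)$ explicitly, and closes with a monotonicity-in-$x$ argument that reduces to $1/y>1$ at $x=0$. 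You instead exploit the \emph{second} zero-off-diagonal identity $\tan\psi=\tan\varphi/(x\tan\varphi+y)$, which reduces the claim to $|x\tan\varphi+y|\le 1$ --- immediate from $x,y\in[0,1]$ and $\tan\varphi\in[-1,0]$. That is shorter and more elementary, at the cost of having to justify that the algorithm's $\tan\varphi$ from~\eqref{e:14}--\eqref{e:15} really is the relevant root of the diagonalization quadratic (which you do, via the half-angle identity) and of treating the $x\tan\varphi+y=0$ degeneracy, which collapses to the $y=0$, $\tan\varphi=0$ case you already handle. Your determinant shortcut $\sigma_{11}'\sigma_{22}'=r_{11}r_{22}$ for the second half of~\eqref{e:17} is also valid (both rotations have unit determinant) and replaces the paper's direct expansion. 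One small inaccuracy: in the complex case the scaling bounds each \emph{component} by $2^{h+1}$, so $|\hat{a}_{ij}|<\sqrt{2}\cdot 2^{h+1}$ and $r_{11}<2^{h+2}$, not $\sqrt{2}\cdot 2^{h+1}$ as you wrote; the finiteness conclusion survives since $\sqrt{3}\cdot 2^{h+2}<2^{h+3}$.
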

\begin{proof}
  Assume $r_{12}^{}>0$ in~\eqref{e:9}, i.e., $x>0$ in~\eqref{e:13}.
  Else, from~\eqref{e:13}, \eqref{e:14}, and \eqref{e:16}, both
  tangents are zero and both cosines are unity, thus from~\eqref{e:12}
  and~\eqref{e:9} follows
  $\sigma_{11}'=r_{11}^{}\ge r_{22}^{}=\sigma_{22}'$, as claimed, and
  only $\sigma_{11}'<\infty$ remains to be proven.

  From~\eqref{e:15} $\cos\varphi\ne 0$, and from~\eqref{e:16}
  $\tan\psi\ne-\infty$, so $\cos\psi\ne 0$.  Scaling
  $U_{\varphi}^{\ast}$ by $1/\cos\varphi$ and $V_{\psi}^{}$ by
  $1/\cos\psi$ in~\eqref{e:12}, and $R$ by $1/r_{11}$ in~\eqref{e:9},
  from~\eqref{e:13} follows
  \begin{equation}
    \begin{bmatrix}
      1 & -\tan\varphi\\
      \tan\varphi & 1
    \end{bmatrix}
    \begin{bmatrix}
      1 & x\\
      0 & y
    \end{bmatrix}
    \begin{bmatrix}
      1 & \tan\psi\\
      -\tan\psi & 1
    \end{bmatrix}=
    \begin{bmatrix}
      \sigma_{11}'' & 0\\
      0 & \sigma_{22}''
    \end{bmatrix}.
    \label{e:18}
  \end{equation}

  Multiplying the matrices on the left hand side of~\eqref{e:18} and
  equating the elements of the result with the corresponding elements
  of the right hand side, one obtains
  \begin{equation}
    \sigma_{11}''=\tan^2\psi+1=\sec^2\psi,\quad
    \sigma_{22}''=(\tan^2\varphi+1)y=(\sec^2\varphi)y,
    \label{e:19}
  \end{equation}
  after an algebraic simplification using the relation~\eqref{e:16}
  for $\tan\psi=y\tan\varphi-x$.  The equations for $\sigma_{11}'$ and
  $\sigma_{22}'$ from~\eqref{e:17} then follow by multiplying the
  equations for $\sigma_{11}''$ and $\sigma_{22}''$ from~\eqref{e:19},
  respectively, by $r_{11}^{}\cos\varphi\cos\psi$.  Specially,
  $\sigma_{11}'>0$, since $\sigma_{11}'=0$ would imply an obvious
  contradiction $r_{11}^{}=0\ge r_{12}>0$ with the assumption.

  If $y=0$, from~\eqref{e:19} and~\eqref{e:17} it follows
  $\sigma_{11}'>0=\sigma_{22}''=\sigma_{22}'$, and, due
  to~\eqref{e:14}, \eqref{e:15}, and \eqref{e:16}, it holds
  $|\tan\psi|=x>0=|\tan\varphi|$.  If $y=1$ then $x=0$
  from~\eqref{e:9} and~\eqref{e:13}, contrary to the assumption.
  Therefore, $0<y<1$ in the following, and let
  \begin{equation}
    a=-2xy<0,\quad
    b=1+x^2-y^2>0,\quad
    c=b+\sqrt{a^2+b^2}>0.
    \label{e:20}
  \end{equation}

  Then, rewrite $\tan(2\varphi)$ from~\eqref{e:14} using~\eqref{e:20} as
  \begin{equation}
    \tan(2\varphi)=\frac{a}{b},\quad
    \tan^2(2\varphi)+1=\frac{a^2+b^2}{b^2},
    \label{e:21}
  \end{equation}
  as well as $\tan\varphi$ from~\eqref{e:15} using~\eqref{e:21}
  and~\eqref{e:20} as
  \begin{equation}
    \tan\varphi=\frac{\tan(2\varphi)}{1+\sqrt{\tan^2(2\varphi)+1}}=\frac{a/b}{(b+\sqrt{a^2+b^2})/b}=\frac{a}{b+\sqrt{a^2+b^2}}=\frac{a}{c}.
    \label{e:22}
  \end{equation}
  From~\eqref{e:22}, $|\tan\varphi|=|a|/c>0$, what gives
  $|\tan\psi|=y|a|/c+x$ with~\eqref{e:13} and~\eqref{e:16}.  Taking
  the ratio of these two absolute values, it has to be proven that
  \begin{equation}
    \frac{|\tan\psi|}{|\tan\varphi|}=\frac{|a|y+cx}{|a|}\ge 1.
    \label{e:23}
  \end{equation}

  Expanding~\eqref{e:23} using~\eqref{e:20}, it follows
  \begin{equation}
    \frac{|a|y}{|a|}+\frac{cx}{|a|}=y+\frac{(1+x^2-y^2+\sqrt{a^2+b^2})x}{2xy}=\frac{1+x^2+y^2+\sqrt{a^2+b^2}}{2y},
    \label{e:24}
  \end{equation}
  where the argument of the square root can be expressed as
  \begin{equation}
    a^2+b^2=(1+x^2)^2+2(x^2-1)y^2+y^4,
    \label{e:25}
  \end{equation}
  after substitution of~\eqref{e:20} for $a$ and $b$ and a subsequent
  algebraic simplification.  For a fixed but arbitrary
  $y$, \eqref{e:25}, and thus the numerator of~\eqref{e:24}, decrease
  monotonically as $x\to 0$.  Substituting zero for $x$
  in~\eqref{e:24} and~\eqref{e:25}, the former becomes
  \begin{displaymath}
    \frac{1+y^2+\sqrt{(1-y^2)^2}}{2y}=\frac{2}{2y}=\frac{1}{y}>1,
  \end{displaymath}
  \addtocounter{equation}{-1}
  what proves the inequality between the tangents.

  The inequality between the scaled singular values follows easily
  from~\eqref{e:19} as
  \begin{displaymath}
    \frac{\sigma_{22}'}{\sigma_{11}'}=\frac{\sigma_{22}''}{\sigma_{11}''}=\frac{\tan^2\varphi+1}{\tan^2\psi+1}y<1,
  \end{displaymath}
  \addtocounter{equation}{-1}
  since $\tan^2\psi\ge\tan^2\varphi$ and $y<1$.  It remains to be
  shown that $\sigma_{11}'<\infty$ for all $R$ from~\eqref{e:9}.
  If $R$ has not been computed (what is never the case in the proposed
  method) from a well-scaled $\widehat{A}$ (see
  subsection~\ref{ss:2.1}), then even $\sigma_{22}'$ can overflow.

  Observe that $1/\sqrt{2}<\cos\varphi\le 1$ from~\eqref{e:14}, and
  $0<\cos\psi\le\cos\varphi$.  From~\eqref{e:17},
  \begin{equation}
    \frac{\sigma_{11}'}{r_{11}^{}}=\frac{\cos\varphi}{\cos\psi}\le\frac{1}{\cos\psi}.
    \label{e:26}
  \end{equation}
  Since, from~\eqref{e:15} and~\eqref{e:16},
  \begin{displaymath}
    |\tan\psi|=y|\tan\varphi|+x\le y+x,\quad
    \cos\psi=1/\sqrt{1+\tan^2\psi},
  \end{displaymath}
  \addtocounter{equation}{-1}
  $x+y$ has to be bounded from above, to be able to bound $\cos\psi$
  from below, and thus~\eqref{e:26} from above.  From~\eqref{e:9} and the column pivoting goal,
  $r_{11}^2\ge r_{12}^2+r_{22}^2$, what
  gives $x^2+y^2\le 1$ after dividing by $r_{11}^2$, i.e., $x$ and $y$
  are contained in the intersection of the first quadrant and the unit
  disc.  On this domain, $x+y$ attains the maximal value of $\sqrt{2}$
  for $x=y=1/\sqrt{2}$, so $|\tan\psi|\le\sqrt{2}$, as claimed, and
  thus $\cos\psi\ge 1/\sqrt{3}$.  Substituting this lower bound for
  $\cos\psi$ in~\eqref{e:26}, it follows
  $\sigma_{11}'\le\sqrt{3}\cdot r_{11}^{}\ll 2\cdot r_{11}^{}$.

  From subsection~\ref{ss:2.1} and~\eqref{e:6}, since
  \begin{displaymath}
    \max_{1\le i,j\le 2}\{|\Re{\hat{a}_{ij}}|,|\Im{\hat{a}_{ij}}|\}<2^{h+1},
  \end{displaymath}
  \addtocounter{equation}{-1}
  it can be concluded that
  \begin{displaymath}
    \max_{1\le i,j\le 2}|\hat{a}_{ij}|<\sqrt{(2^{h+1})^2+(2^{h+1})^2}=\sqrt{2}\cdot 2^{h+1},
  \end{displaymath}
  \addtocounter{equation}{-1}
  and therefore
  \begin{equation}
    r_{11}=\max_{1\le j\le 2}\|\hat{a}_j\|_F<\sqrt{(\sqrt{2}\cdot 2^{h+1})^2+(\sqrt{2}\cdot 2^{h+1})^2}=2\cdot 2^{h+1}=2^{h+2},
    \label{e:27}
  \end{equation}
  so $\sigma_{11}'\ll 2\cdot 2^{h+2}=2^{h+3}$, where the right hand
  side is the immediate successor (that represents $\infty$) of the
  largest finite floating-point number, as claimed.
\end{proof}

Using Theorem~\ref{t:1}, from~\eqref{e:12} and \eqref{e:11} then
follows
\begin{equation}
  \Sigma=2^{-s}\Sigma',\quad
  U^{\ast}=U_{\varphi}^{\ast} U_+^{\ast},\quad
  V=V_+^{} V_{\psi}^{},\quad
  U=(U^{\ast})^{\ast},
  \label{e:28}
\end{equation}
where $\Sigma'$ has to be backscaled to obtain the singular values of
the original input matrix.  However, the backscaling should be skipped
if it would cause the singular values to overflow or (less
catastrophic but still inaccurate outcome) underflow, while informing
the user of such an event by preserving the value of $s$.

Specifically, by matrix multiplication, from~\eqref{e:28}
and~\eqref{e:11} it follows
\begin{equation}
  \begin{aligned}
    U&=\cos\alpha\cos\varphi\,P_r
    \begin{bmatrix}
      d_{11}^{}(1-\hat{d}_{22}\tan\alpha\tan\varphi) & d_{11}^{}(\tan\varphi+\hat{d}_{22}^{}\tan\alpha)\\
      -d_{22}^{}(\tan\alpha+\hat{d}_{22}\tan\varphi) & d_{22}^{}(\hat{d}_{22}^{}-\tan\alpha\tan\varphi)
    \end{bmatrix},\\
    V&=\cos\psi\,P_c
    \begin{bmatrix}
      1 & \tan\psi\\
      -\tilde{d}_{22}^{}\tan\psi & \tilde{d}_{22}^{}
    \end{bmatrix}.
  \end{aligned}
  \label{e:29}
\end{equation}
Computing each element of a complex $U$ requires only one complex
multiplication.

Writing a complex number $z$ as
$(\Re{z},\Im{z})=\Re{z}+\mathrm{i}\cdot\Im{z}$, noting that $d_{11}$,
$d_{22}$, and $\hat{d}_{22}$ are the complex conjugates of
$(D^{\ast})_{11}$, $(D^{\ast})_{22}$, and $(\widehat{D}^{\ast})_{22}$,
respectively, and precomputing $c=\cos\alpha\cos\varphi$ and
$t=-\tan\alpha\tan\varphi$, \eqref{e:29} can be expanded as, e.g.,
\begin{displaymath}
  \begin{aligned}
    (P_r^{\ast}U)_{11}^{}&=\hphantom{-}c(d_{11}^{}\cdot(\fma(\Re{\hat{d}_{22}^{}},t,1),\Im{\hat{d}_{22}^{}}t)),\\
    (P_r^{\ast}U)_{21}^{}&=-c(d_{22}^{}\cdot(\fma(\Re{\hat{d}_{22}^{}},\tan\varphi,\tan\alpha),\Im{\hat{d}_{22}^{}}\tan\varphi)),\\
    (P_r^{\ast}U)_{12}^{}&=\hphantom{-}c(d_{11}^{}\cdot(\fma(\Re{\hat{d}_{22}^{}},\tan\alpha,\tan\varphi),\Im{\hat{d}_{22}^{}}\tan\alpha)),\\
    (P_r^{\ast}U)_{22}^{}&=\hphantom{-}c(d_{22}^{}\cdot(\fma(-\tan\alpha,\tan\varphi,\Re{\hat{d}_{22}^{}}),\Im{\hat{d}_{22}^{}})),
  \end{aligned}
\end{displaymath}
\addtocounter{equation}{-1}
but another mathematically equivalent computation that minimizes the
number of roundings required for forming the elements of $P_r^{\ast}U$
as this one does is also valid.
%
%
\section{Vector-friendly data layout}\label{s:3}
%
%
Vectors replace scalars in the SIMD arithmetic operations.  A vector
should hold \texttt{S} elements from the same matrix sequence, with
the same row and column indices, and the consecutive bracketed
indices.  When computing with complex numbers, however, it is more
efficient to keep the real and the imaginary parts of the elements in
separate vectors, since there are no hardware vector instructions for
the complex multiplication and division, e.g., which thus have to be
implemented manually.  Also, a vector should be aligned in memory to a
multiple of \texttt{W} bytes to employ the most efficient versions of
the vector load/store operations.  It is therefore essential to
establish a vector-friendly layout for the matrix sequences
$\mathbf{A}$, $\mathbf{U}$, $\mathbf{V}$, and $\mathbf{\Sigma}'$ in
the linear memory space.  One such layout, inspired by splitting the
real and the complex parts of the matrix elements into separate
vectors~\cite[subsection~6.2]{Singer-DiNapoli-Novakovic-Caklovic-19},
is
\begin{displaymath}
  \settowidth{\fbw}{\fbox{$\Re{a_{ij}^{[\hat{n}]}}$}}
  \Re{\mathbf{a}_{ij}^{}}=\text{\framebox[\fbw][c]{$\Re{a_{ij}^{[1]}}$}\framebox[\fbw][c]{$\Re{a_{ij}^{[2]}}$}\framebox[\fbw][c]{$\vphantom{\Re{a_{ij}^{[\hat{n}]}}}\cdots$}\framebox[\fbw][c]{$\Re{a_{ij}^{[\hat{n}]}}$}}\,,\quad
  \settowidth{\fbw}{\fbox{$\Im{a_{ij}^{[\hat{n}]}}$}}
  \Im{\mathbf{a}_{ij}^{}}=\text{\framebox[\fbw][c]{$\Im{a_{ij}^{[1]}}$}\framebox[\fbw][c]{$\Im{a_{ij}^{[2]}}$}\framebox[\fbw][c]{$\vphantom{\Im{a_{ij}^{[\hat{n}]}}}\cdots$}\framebox[\fbw][c]{$\Im{a_{ij}^{[\hat{n}]}}$}}\,.
\end{displaymath}
\addtocounter{equation}{-1}
where $\Re{\mathbf{a}_{ij}}$ and $\Im{\mathbf{a}_{ij}}$ (similarly,
$\Re{\mathbf{u}_{ij}}$, $\Im{\mathbf{u}_{ij}}$, and
$\Re{\mathbf{v}_{ij}}$, $\Im{\mathbf{v}_{ij}}$) for $i,j\in\{1,2\}$
are the sequences of the real and the imaginary components,
respectively, of the elements in the $i$th row and the $j$th column of
the matrices in $\mathbf{A}$ (similarly, in $\mathbf{U}$ and in
$\mathbf{V}$).

Each train of boxes represents a contiguous region of memory aligned
to \texttt{W} bytes.  In the real case, no $\Im$-boxes exist, but the
layout otherwise stays the same.  The scaled singular values
$\sigma_{\max}^{\prime [k]}$ and $\sigma_{\min}^{\prime [k]}$
from~\eqref{e:12} are stored as
\begin{displaymath}
  \settowidth{\fbw}{\fbox{$\sigma_{\max}^{\prime [\hat{n}]}$}}
  \bm{\sigma}_{\max}'=\text{\framebox[\fbw][c]{$\vphantom{\sigma_{\min}^{\prime [1]}}\sigma_{\max}^{\prime [1]}$}\framebox[\fbw][c]{$\vphantom{\sigma_{\min}^{\prime [2]}}\sigma_{\max}^{\prime [2]}$}\framebox[\fbw][c]{$\vphantom{\sigma_{\min}^{\prime [\hat{n}]}}\cdots$}\framebox[\fbw][c]{$\vphantom{\sigma_{\min}^{\prime [\hat{n}]}}\sigma_{\max}^{\prime [\hat{n}]}$}}\,,\quad
  \settowidth{\fbw}{\fbox{$\sigma_{\min}^{\prime [\hat{n}]}$}}
  \bm{\sigma}_{\min}'=\text{\framebox[\fbw][c]{$\sigma_{\min}^{\prime [1]}$}\framebox[\fbw][c]{$\sigma_{\min}^{\prime [2]}$}\framebox[\fbw][c]{$\vphantom{\sigma_{\min}^{\prime [\hat{n}]}}\cdots$}\framebox[\fbw][c]{$\sigma_{\min}^{\prime [\hat{n}]}$}}\,,
\end{displaymath}
\addtocounter{equation}{-1}
respectively, while the scaling parameters $s^{[k]}$ from~\eqref{e:2}
are laid out as
\begin{displaymath}
  \settowidth{\fbw}{\fbox{$s^{[\hat{n}]}$}}
  \mathbf{s}=\text{\framebox[\fbw][c]{$s^{[1]}$}\framebox[\fbw][c]{$s^{[2]}$}\framebox[\fbw][c]{$\vphantom{s^{[\hat{n}]}}\cdots$}\framebox[\fbw][c]{$s^{[\hat{n}]}$}}\,.
\end{displaymath}
\addtocounter{equation}{-1}

The virtual elements, with their bracketed indices ranging from $n+1$
to $\hat{n}$, serve if present as a (e.g., zero) padding, which
ensures that all vectors, including the last one, formed from the
consecutive elements of a box train, hold the same maximal number
(\texttt{S}) of defined values and can thus be processed in an uniform
manner.

The input sequence $\mathbf{A}$ may initially be in another layout and
has to be repacked before any further computation.  Also, the output
sequences $\mathbf{U}$, $\mathbf{V}$, $\mathbf{\Sigma}'$, and
$\mathbf{s}$ may have to be repacked for a further processing.  Such
reshufflings should be avoided, as they incur a substantial overhead
in both time and memory requirements.

Layout of data, including the intermediate results, in vector
registers during the computation is the same as it is for the box
trains, but with \texttt{S} elements instead of $\hat{n}$.  The
$\mathsf{v}$th vector, for
$1\le\mathsf{v}\le\mathsf{V}=\hat{n}/\mathtt{S}$, encompasses the
consecutive indices $k$,
\begin{equation}
  (\mathsf{v}-1)\cdot\mathtt{S}+1\le k\le\mathsf{v}\cdot\mathtt{S}.
  \label{e:30}
\end{equation}
A vector is loaded into, kept in, and stored from, a variable of the C
type \texttt{\_\_m512d}.

In the following, a bold lowercase letter stands for a vector, and the
uppercase one for a (logical, not necessarily in-memory) matrix
sequence.  For example, $\mathbf{R}$ is a sequence of $\hat{n}$
matrices $R^{[k]}$, of which $\mathbf{R}[\mathsf{v}]$ is a subsequence
of length \texttt{S}, and $\mathbf{r}_{12}[\mathsf{v}]$ is a vector
containing elements $r_{12}^{[k]}$ of $\mathbf{R}^{[k]}$, for some
$\mathsf{v}$ and its corresponding indices $k$ from~\eqref{e:30}.  A
bold constant denotes a vector with all its values being equal to the
given constant.  An arithmetic operation on vectors (or collections
thereof) or matrix sequences is a shorthand for a sequence of the
elementwise operations; e.g.,
\begin{displaymath}
  \mathbf{2}^{-\mathbf{s}}=(2^{-s^{[k]}})_k^{},\quad
  \mathbf{B}\mathbf{C}=(B^{[k]}C^{[k]})_k^{},\quad
  1\le k\le\hat{n}.
\end{displaymath}
\addtocounter{equation}{-1}
where $\mathbf{B}$ and $\mathbf{C}$ are any two matrix sequences,
$B^{[k]}C^{[k]}$ is a product of matrices of order two, and
$\mathbf{2}$ is a collection of vectors with all their values equal to
two.  All bracketed indices are one-based, as it is customary in
linear algebra, but in the C code they are zero-based, being thus one
less than they are in the paper's text.
%
%
\section{Overview of the algorithm for the batched SVDs of order two}\label{s:4}
%
%
When there are two cases, the real and the complex one, all
code-presenting figures cover the latter with a mixture of the actual
statements and a mathematically oriented pseudocode.  The real-case
differences are described in them in the comments starting with
$\mathbb{R}$.  A function name in uppercase, \texttt{NAME}, is a
shorthand for the \texttt{\_mm512\_\emph{name}\_pd} C compiler
intrinsic, if the operation is available in the machine's instruction
set, or for an equivalent sequence of the bit-pattern preserving casts
to and from the integer vectors and an equivalent integer
\texttt{NAME} operation.  More precisely, for
$\mathtt{NAME}\in\{\mathtt{AND},\mathtt{ANDNOT},\mathtt{OR},\mathtt{XOR}\}$
bitwise operations, if the AVX-512DQ instruction set extensions are
not supported, an exception to the naming rule holds:
\begin{displaymath}
  \begin{aligned}
    \mathop{\text{\texttt{NAME}}}(x,y)&=\mathop{\text{\texttt{\_mm512\_castsi512\_pd}}}(\mathop{\text{\texttt{\_mm512\_\emph{name}\_epi64}}}(\hat{x},\hat{y})),\\
    \hat{x}&=\mathop{\mathtt{\_mm512\_castpd\_si512}}(x),\quad\hat{y}=\mathop{\mathtt{\_mm512\_castpd\_si512}}(y).
  \end{aligned}
\end{displaymath}
\addtocounter{equation}{-1}
All other required operations have a \texttt{\_pd} variant
(with \texttt{double} precision vector lanes) in the core AVX-512F
instruction set, so it suffices for implementing the entire
algorithm.  Additionally, let \texttt{CMPLT\_MASK} stand for the
\texttt{\_mm512\_cmplt\_pd\_mask} intrinsic, i.e., for the less-than
lane-wise comparison of two vectors.

The four phases of the algorithm for the batched SVDs of order two, as
listed in section~\ref{s:1}, can be succinctly depicted by the
following logical execution pipeline,
\begin{displaymath}
  \begin{array}{c}
    \begin{array}{ccccccccccccccccccc}
      \mathbf{A}&\longrightarrow&\mathbf{\widehat{A}}&\longrightarrow&\mathbf{A}'&\longrightarrow&\mathbf{A}''&\longrightarrow&\mathbf{A}'''&\longrightarrow&\mathbf{R}''&\longrightarrow&\mathbf{R}'&\longrightarrow&\mathbf{R}&\longrightarrow&\mathbf{\Sigma}'&\xrightarrow[\text{safe}]{\text{\textbf{not}}}&\mathbf{\Sigma}\\[-12pt]
      &\downarrow&&\downarrow&&\downarrow&&\downarrow&&\downarrow&&\downarrow&&\downarrow&&\downarrow\\
      &\mathbf{s}&&\hphantom{\null_c}\mathbf{P}_c^{}&&\hphantom{\null_r^{\ast}}\mathbf{P}_r^{\ast}&&\hphantom{\null^{\ast}}\mathbf{D}^{\ast}&&\hphantom{\null_{\bm{\alpha}}^{\ast}}\mathbf{Q}_{\bm{\alpha}}^{\ast}&&\mathbf{\widetilde{D}}&&\hphantom{\null^{\ast}}\mathbf{\widehat{D}}^{\ast}&&\null\;\mathbf{U}_{\bm{\varphi}}^{\ast},\mathbf{V}_{\bm{\psi}}^{}&\rightarrow&\mathbf{U},\mathbf{V}
    \end{array}\notag\\[15pt]
    \mathbf{U}=\mathbf{P}_r\mathbf{D}\mathbf{Q}_{\bm{\alpha}}\mathbf{\widehat{D}}\mathbf{U}_{\bm{\varphi}},\quad
    \mathbf{V}=\mathbf{P}_c\mathbf{\widetilde{D}}\mathbf{V}_{\bm{\psi}};\quad
    \mathbf{\Sigma}=\mathbf{2}^{-\mathbf{s}} \mathbf{\Sigma}',
  \end{array}
\end{displaymath}
\addtocounter{equation}{-1}
where the first row shows the transformations of $\mathbf{A}$, the
second row contains the various matrix sequences that are the
``by-products'' of the computation, described in section~\ref{e:2},
ending with the sequences of the left and the right singular vectors,
that are formed as indicated in the third row.  As the singular values
$\mathbf{\Sigma}$ can overflow due to the backscaling (see
subsection~\ref{ss:2.3}) of the scaled ones ($\mathbf{\Sigma}'$),
computing them unconditionally is unsafe, and such postprocessing is
left to the user's discretion.  In certain use-cases it might be known
in advance that the singular values cannot overflow/underflow, e.g.,
if the initial matrices have already been well-scaled at their
formation.  The backscaling, performed as in Fig.~\ref{l:1}, is then
unconditionally safe.
\begin{figure}[h!btp]
  \begin{lstlisting}
$-\mathbf{0}$ = SET1(-0.0); // a constant vector with all lanes equal to -0.0
$-\mathbf{s}[\mathsf{v}]$ = XOR($\mathbf{s}[\mathsf{v}]$, $-\mathbf{0}$); // negation is performed as XOR-ing with $-0$
$\bm{\sigma}_{\max}^{}[\mathsf{v}]$ = SCALEF($\bm{\sigma}_{\max}'[\mathsf{v}]$, $-\mathbf{s}[\mathsf{v}]$); $\bm{\sigma}_{\min}^{}[\mathsf{v}]$ = SCALEF($\bm{\sigma}_{\min}'[\mathsf{v}]$, $-\mathbf{s}[\mathsf{v}]$);
$\mathbf{s}[\mathsf{v}]$ = $-\mathbf{0}$; // inform the user that the backscaling has been performed
  \end{lstlisting}
  \caption{Optional vectorized backscaling of $\mathbf{\Sigma}'$ to
    $\mathbf{\Sigma}$ by $\mathbf{2}^{-\mathbf{s}}$.}
  \label{l:1}
\end{figure}

The pipeline is executed independently on each non-overlapping
subsequence of \texttt{S} consecutive matrices.  If there are more
such sequences than the active threads, at a point in time some
sequences might have already been processed, while the others are
still waiting, either for the start or the completion of the
processing.  A conceptual core of a driver routine implementing such a
pass over the data is shown in Fig.~\ref{l:2}, where \texttt{xSsvd2},
$\mathtt{x}\in\{\mathtt{d},\mathtt{z}\}$, are the main (real or
complex, respectively), single-threaded routines that are responsible
for all vectorized computations on each particular sequence of size
\texttt{S}.  The OpenMP~\cite{OpenMP-18} \texttt{parallel for}
directive in Fig.~\ref{l:2} assumes a user-defined maximal number and
placement/affinity of threads.
\begin{figure}[h!btp]
  \begin{lstlisting}
const size_t $\mathsf{V}$ = ($n$ + (S - 1)) / S; // $\mathsf{V}=\left\lceil n/\mathtt{S}\right\rceil$, $\hat{n}=\mathsf{V}\cdot\mathtt{S}$
#pragma omp parallel for shared($\mathsf{V},\mathbf{A},\mathbf{U},\mathbf{V},\mathbf{\Sigma}',\mathbf{s}$)
for (size_t $\mathsf{v}$ = 0; $\mathsf{v}$ < $\mathsf{V}$; ++$\mathsf{v}$) xSsvd2($\mathbf{A}[\mathsf{v}],\mathbf{U}[\mathsf{v}],\mathbf{V}[\mathsf{v}],\mathbf{\Sigma}'[\mathsf{v}],\mathbf{s}[\mathsf{v}]$);
  \end{lstlisting}
  \caption{A conceptualization of the main part of a driver routine
    for the batched SVDs of order two, with an OpenMP parallel loop
    over the data, where each of the $\mathsf{V}$ subsequences of
    length \texttt{S} can be processed concurrently with others by an
    \texttt{xSsvd2} routine that performs \texttt{S} SVDs
    simultaneously.}
  \label{l:2}
\end{figure}

The input arguments of \texttt{xSsvd2} are (the pointers to) the
arrays, each aligned to \texttt{W} bytes, of \texttt{S}
\texttt{double} values, e.g., \texttt{const double A12r[static S]} for
$\Re{\mathbf{a}_{12}}[\mathsf{v}]$.  The output arguments are similar,
e.g., \texttt{double U21i[static S]} for
$\Im{\mathbf{u}_{21}}[\mathsf{v}]$.  Note that the same interface, up
to replacing \texttt{S} by \texttt{1}, would be applicable to the
pointwise \texttt{x1svd2} routine for a single $2\times 2$ SVD, but
without the implied alignment restriction.

No branching is involved explicitly in the \texttt{xSsvd2} routines.
It is therefore fully branch-free, if the used SVML routines are.  All
data, once loaded from memory or computed, is intended to be held in
the \texttt{zmm} vector registers until the output has been formed and
written back to RAM\@.  This goal is almost achievable in the test
setting, since there are two vector register spillages, with a total
of only four extra memory accesses (two writes and two reads), as
reported by the optimizer.  A hand-tuned self-contained assembly might
do away with these as well.

The first three phases of the algorithm are vectorized as described in
sections~\ref{s:5}, \ref{s:6}, and~\ref{s:7}, respectively, since each
of the phases can be viewed as an algorithm on its own.  They are,
however, chained by the dataflow, each having as its input the output
of the previous one.  Should the output of a phase be made available
alongside the final results, it could be written to an additional
memory buffer in the same layout as presented in section~\ref{s:3}.
Otherwise, the intermediate results are not preserved.

Vectorization of the last, fourth phase of the algorithm
from~\eqref{e:29} is as tedious and uninformative as it is
straightforward, and so it is omitted for brevity.  It suffices to say
that $\Re{\mathbf{u}}_{ij}$ (and $\Im{\mathbf{u}}_{ij}$) and
$\Re{\mathbf{v}}_{ij}$ (and $\Im{\mathbf{v}}_{ij}$), for
$1\le i,j\le 2$, are computed from~\eqref{e:29}, using the $\fma$
operation where possible, and~\eqref{e:5} for the complex
multiplications.  The final row permutations by $\mathbf{P}_r$ or
$\mathbf{P}_c$ are performed in the same way as the row swaps in the
URV factorization phase, described in Fig.~\ref{l:6} in
section~\ref{s:6}.  An interested reader is referred to the actual
code in the supplementary material\footnote{Supplementary material is
  available in \texttt{https://github.com/venovako/VecKog}
  repository.}.
%
%
\section{Vectorized exact scalings of the input matrices}\label{s:5}
%
%
Computation of the scaling parameters $\mathbf{s}$ is remarkably
simple, as shown in Fig.~\ref{l:3}.
\begin{figure}[h!btp]
  \begin{lstlisting}
$\mathbf{h}$ = SET1((double)(DBL_MAX_EXP-3)); // set each lane of $\mathbf{h}$ to $h$
$\mathbf{e}_{ij}^{\Re}[\mathsf{v}]$ = SUB($\mathbf{h}$, GETEXP($\Re{\mathbf{a}_{ij}^{}}[\mathsf{v}]$)); // $\mathbf{h}-\exp_2^{}\Re{\mathbf{a}_{ij}^{}}[\mathsf{v}]$
// take $\mathbf{e}^{\Re}[\mathsf{v}]=\min\{\mathbf{e}_{11}^{\Re}[\mathsf{v}],\mathbf{e}_{21}^{\Re}[\mathsf{v}],\mathbf{e}_{12}^{\Re}[\mathsf{v}],\mathbf{e}_{22}^{\Re}[\mathsf{v}]\}$ by a two-level $\min$-reduction
$\mathbf{e}^{\Re}[\mathsf{v}]$ = MIN(MIN($\mathbf{e}_{11}^{\Re}[\mathsf{v}]$, $\mathbf{e}_{21}^{\Re}[\mathsf{v}]$), MIN($\mathbf{e}_{12}^{\Re}[\mathsf{v}]$, $\mathbf{e}_{22}^{\Re}[\mathsf{v}]$));
// $\mathbf{e}_{ij}^{\Im}[\mathsf{v}]$, with $1\le i,j\le 2$, and $\mathbf{e}^{\Im}[\mathsf{v}]$ are computed analogously from $\Im{\mathbf{a}_{ij}^{}}[\mathsf{v}]$
$\mathbf{s}[\mathsf{v}]$ = MIN(SET1(DBL_MAX), MIN($\mathbf{e}^{\Re}[\mathsf{v}]$, $\mathbf{e}^{\Im}[\mathsf{v}]$)); // from $\text{\eqref{e:2}}$, $\mathbb{R}\colon\mathbf{e}^{\Im}[\mathsf{v}]$ nonexistent
$\Re{\mathbf{\hat{a}}_{ij}^{}}[\mathsf{v}]$ = SCALEF($\Re{\mathbf{a}_{ij}^{}}[\mathsf{v}]$, $\mathbf{s}[\mathsf{v}]$); $\Im{\mathbf{\hat{a}}_{ij}^{}}[\mathsf{v}]$ = SCALEF($\Im{\mathbf{a}_{ij}^{}}[\mathsf{v}]$, $\mathbf{s}[\mathsf{v}]$);
  \end{lstlisting}
  \caption{Vectorized computation of the scaling parameters $\mathbf{s}$ from~\eqref{e:2} and the scaling of $\mathbf{A}$.}
  \label{l:3}
\end{figure}
It is advantageous to have
$\mathop{\mathtt{GETEXP}}(\mathbf{a})=\exp_2^{}(\mathbf{a})$ and
$\mathop{\mathtt{SCALEF}}(\mathbf{a},\mathbf{b})=\mathbf{a}\cdot\mathbf{2}^{\mathbf{b}}$
vector operations, returning a correct (or correctly rounded,
respectively) result, even with subnormal inputs (the former) or
outputs (the latter).  Should they not be available on another
platform, their scalar variants (\texttt{frexp} and \texttt{scalbn},
respectively) might be used instead on the values in each lane,
slowing the execution considerably.

Once $\mathbf{\widehat{A}}$ is obtained from $\mathbf{A}$, the column
norms of the former are computed as in Fig.~\ref{l:4}.  Observe that
$\mathop{\mathtt{ABS}}(\mathbf{b})=\mathop{\mathtt{ANDNOT}}(-\mathbf{0},\mathbf{b})$,
since
$\mathop{\mathtt{ANDNOT}}(\mathbf{a},\mathbf{b})=\neg\mathbf{a}\wedge\mathbf{b}$
bitwise, and that having a vectorized \texttt{HYPOT} is essential
here.  Should it not be available, it would have to be carefully
implemented to avoid the overflows in the intermediate results.  A
na\"{\i}ve per-lane computation of $c=\hypot(a,b)$, where $a$ and $b$
are finite, without adjusting the exponents of $a$ and $b$, but with
one extra division instead, is to let $a'=\max\{|a|,|b|\}$,
$b'=\min\{|a|,|b|\}$, $a^+=\max\{a',\mathtt{DBL\_TRUE\_MIN}\}>0$,
$q^+=b'/a^+\le 1$, and $c=a'\cdot\sqrt{\fma(q^+,q^+,1)}$.

\begin{figure}[h!btp]
  \begin{lstlisting}
$|\mathbf{\hat{a}}_{ij}|[\mathsf{v}]$ = HYPOT($\Re{\mathbf{\hat{a}}_{ij}[\mathsf{v}]}$, $\Im{\mathbf{\hat{a}}_{ij}[\mathsf{v}]}$); // from $\text{\eqref{e:1}}$, $\mathbb{R}\colon|\mathbf{\hat{a}}_{ij}|[\mathsf{v}]$ = ANDNOT($-\mathbf{0}$, $\Re{\mathbf{\hat{a}}_{ij}}[\mathsf{v}]$)
$\|\mathbf{\hat{a}}_j\|_F[\mathsf{v}]$ = HYPOT($|\mathbf{\hat{a}}_{1j}|[\mathsf{v}]$, $|\mathbf{\hat{a}}_{2j}|[\mathsf{v}]$); // with $j\in\{1,2\}$, from $\text{\eqref{e:3}}$
  \end{lstlisting}
  \caption{Vectorized computation of the column norms $\|\mathbf{\hat{a}}_j\|_F$ from~\eqref{e:3}.}
  \label{l:4}
\end{figure}
%
%
\section{Vectorized URV factorizations of order two}\label{s:6}
%
%
Having its column norms computed, $\mathbf{\widehat{A}}$ has to be
pivoted, each matrix by a column-swapping permutation (or identity, if
a swap is not required), such that a column with the largest norm
becomes the first one.  This is accomplished in Fig.~\ref{l:5} by the
\texttt{MASK\_BLEND} operation, that selects a value for the $\ell$th
output lane from the same lane in either the first or the second
argument vector, according to a bit-mask \textsc{c} that compactly
encodes the results of the lane-wise $<$-comparisons of the norms by
the \texttt{CMPLT\_MASK} operation.  If the $\ell$th bit in the mask
is zero (i.e., the $\ell$th comparison is false), the $\ell$th output
lane gets its value from the first vector, and the corresponding
permutation $\mathbf{P}_c^{[k]}$, where
$k=(\mathsf{v}-1)\cdot\mathtt{S}+\ell$, is identity; else, the output
value is taken from the second vector, and the permutation encodes a
swap.  All norms are finite and thus ordered, so the complement of the
relation $<$ is $\ge$.
\begin{figure}[h!btp]
  \begin{lstlisting}
$\text{\textsc{c}}[\mathsf{v}]$ = CMPLT_MASK($\|\mathbf{\hat{a}}_1\|_F[\mathsf{v}]$, $\|\mathbf{\hat{a}}_2\|_F[\mathsf{v}]$); // $\mathtt{S}$-bit mask encodes the $<$ relation
$\Re{\mathbf{a}_{i1}'}[\mathsf{v}]$ = MASK_BLEND($\text{\textsc{c}}[\mathsf{v}]$, $\Re{\mathbf{\hat{a}}_{i1}^{}}[\mathsf{v}]$, $\Re{\mathbf{\hat{a}}_{i2}^{}}[\mathsf{v}]$); // similarly for $\Im{\mathbf{a}_{i1}'}[\mathsf{v}]$
$\Re{\mathbf{a}_{i2}'}[\mathsf{v}]$ = MASK_BLEND($\text{\textsc{c}}[\mathsf{v}]$, $\Re{\mathbf{\hat{a}}_{i2}^{}}[\mathsf{v}]$, $\Re{\mathbf{\hat{a}}_{i1}^{}}[\mathsf{v}]$); // similarly for $\Im{\mathbf{a}_{i2}'}[\mathsf{v}]$
$|\mathbf{a}_{i1}'|[\mathsf{v}]$ = MASK_BLEND($\text{\textsc{c}}[\mathsf{v}]$, $|\mathbf{\hat{a}}_{i1}^{}|[\mathsf{v}]$, $|\mathbf{\hat{a}}_{i2}^{}|[\mathsf{v}]$);
$|\mathbf{a}_{i2}'|[\mathsf{v}]$ = MASK_BLEND($\text{\textsc{c}}[\mathsf{v}]$, $|\mathbf{\hat{a}}_{i2}^{}|[\mathsf{v}]$, $|\mathbf{\hat{a}}_{i1}^{}|[\mathsf{v}]$);
$\|\mathbf{a}_1'\|_F^{}[\mathsf{v}]$ = MASK_BLEND($\text{\textsc{c}}[\mathsf{v}]$, $\|\mathbf{\hat{a}}_1^{}\|_F^{}[\mathsf{v}]$, $\|\mathbf{\hat{a}}_2^{}\|_F^{}[\mathsf{v}]$);
$\|\mathbf{a}_2'\|_F^{}[\mathsf{v}]$ = MASK_BLEND($\text{\textsc{c}}[\mathsf{v}]$, $\|\mathbf{\hat{a}}_2^{}\|_F^{}[\mathsf{v}]$, $\|\mathbf{\hat{a}}_1^{}\|_F^{}[\mathsf{v}]$);
  \end{lstlisting}
  \caption{Vectorized column pivoting of $\mathbf{\widehat{A}}$.}
  \label{l:5}
\end{figure}

Not only $\mathbf{A}'$ itself has to be obtained.  The absolute values
of the elements and the column norms also have to be subject to the
same (maybe identity) permutations, as in Fig.~\ref{l:5}, to avoid
recomputing them unnecessarily and at a greater cost, especially in
the complex case.  The similar principles hold for the row sorting of
$\mathbf{A}'$ in Fig.~\ref{l:6}.
\begin{figure}[h!btp]
  \begin{lstlisting}
$\text{\textsc{r}}[\mathsf{v}]$ = CMPLT_MASK($|\mathbf{a}_{11}'|[\mathsf{v}]$, $|\mathbf{a}_{21}'|[\mathsf{v}]$); // Is $|\mathbf{a}_{11}'|[\mathsf{v}]<|\mathbf{a}_{21}'|[\mathsf{v}]$, lane-wise?
$\Re{\mathbf{a}_{1j}''}[\mathsf{v}]$ = MASK_BLEND($\text{\textsc{r}}[\mathsf{v}]$, $\Re{\mathbf{a}_{1j}'}[\mathsf{v}]$, $\Re{\mathbf{a}_{2j}'}[\mathsf{v}]$); // similarly for $\Im{\mathbf{a}_{1j}''}[\mathsf{v}]$
$\Re{\mathbf{a}_{2j}''}[\mathsf{v}]$ = MASK_BLEND($\text{\textsc{r}}[\mathsf{v}]$, $\Re{\mathbf{a}_{2j}'}[\mathsf{v}]$, $\Re{\mathbf{a}_{1j}'}[\mathsf{v}]$); // similarly for $\Im{\mathbf{a}_{2j}''}[\mathsf{v}]$
$|\mathbf{a}_{1j}''|[\mathsf{v}]$ = MASK_BLEND($\text{\textsc{r}}[\mathsf{v}]$, $|\mathbf{a}_{1j}'|[\mathsf{v}]$, $|\mathbf{a}_{2j}'|[\mathsf{v}]$);
$|\mathbf{a}_{2j}''|[\mathsf{v}]$ = MASK_BLEND($\text{\textsc{r}}[\mathsf{v}]$, $|\mathbf{a}_{2j}'|[\mathsf{v}]$, $|\mathbf{a}_{1j}'|[\mathsf{v}]$);
  \end{lstlisting}
  \caption{Vectorized row sorting of $\mathbf{A}'$.}
  \label{l:6}
\end{figure}

Since only the rows of $\mathbf{A}'$ are possibly swapped to get
$\mathbf{A}''$, the column norms do not change, so
$\|\mathbf{a}_j''\|_F^{}=\|\mathbf{a}_j'\|_F^{}$.  To make the first
columns of $\mathbf{A}''$ real and non-negative, $\mathbf{D}^{\ast}$
from~\eqref{e:4} or~\eqref{e:10} is computed and applied as in
Fig.~\ref{l:7}.  Observe how the sign extractions and the implicit
complex conjugations and multiplications are performed in
Fig.~\ref{l:7}, as the same pattern is assumed for them in the
following.
\begin{figure}[h!btp]
  \begin{lstlisting}
$\mathbf{1}$ = SET1(1.0); $\mathbf{m}$ = SET1(DBL_TRUE_MIN); // ones & the successors of $+0$
$|\mathbf{a}_{i1}^+|[\mathsf{v}]$ = MAX($|\mathbf{a}_{i1}''|[\mathsf{v}]$, $\mathbf{m}$); // from $\text{\eqref{e:1}}$, with $i\in\{1,2\}$, here and below
$\Re{\mathbf{d}_{ii}^{}}[\mathsf{v}]$ = OR(MIN(DIV(ANDNOT($-\mathbf{0}$, $\Re{\mathbf{a}_{i1}''}[\mathsf{v}]$), $|\mathbf{a}_{i1}''|[\mathsf{v}]$), $\mathbf{1}$), AND($\Re{\mathbf{a}_{i1}''}[\mathsf{v}]$, $-\mathbf{0}$));
$\Im{\mathbf{d}_{ii}^{}}[\mathsf{v}]$ = DIV($\Im{\mathbf{a}_{i1}''}[\mathsf{v}]$, $|\mathbf{a}_{i1}^+|[\mathsf{v}]$); // $\Im{\mathbf{d}_{ii}^{\ast}}[\mathsf{v}]=-\Im{\mathbf{d}_{ii}^{}}[\mathsf{v}]$ implicitly
//$\,\mathbb{R}\colon\Re{\mathbf{d}_{ii}^{}}[\mathsf{v}]$ = AND($\Re{\mathbf{a}_{i1}''}[\mathsf{v}]$, $-\mathbf{0}$), only the sign bit ($\pm 0$, not $\pm 1$ from $\text{\eqref{e:10}}$)
// $(\Re{\mathbf{a}_{i2}'''}[\mathsf{v}],\Im{\mathbf{a}_{i2}'''}[\mathsf{v}])=(\Re{\mathbf{d}_{ii}^{\ast}}[\mathsf{v}],\Im{\mathbf{d}_{ii}^{\ast}}[\mathsf{v}])\cdot(\Re{\mathbf{a}_{i2}''}[\mathsf{v}],\Im{\mathbf{a}_{i2}''}[\mathsf{v}])$, from $\text{\eqref{e:4}}$
$\Re{\mathbf{a}_{i2}'''}[\mathsf{v}]$ = FMADD($\Re{\mathbf{d}_{ii}^{}}[\mathsf{v}]$, $\Re{\mathbf{a}_{i2}''}[\mathsf{v}]$, MUL($\Im{\mathbf{d}_{ii}^{}}[\mathsf{v}]$, $\Im{\mathbf{a}_{i2}''}[\mathsf{v}]$)); // and below from $\text{\eqref{e:5}}$
$\Im{\mathbf{a}_{i2}'''}[\mathsf{v}]$ = FMSUB($\Re{\mathbf{d}_{ii}^{}}[\mathsf{v}]$, $\Im{\mathbf{a}_{i2}''}[\mathsf{v}]$, MUL($\Im{\mathbf{d}_{ii}^{}}[\mathsf{v}]$, $\Re{\mathbf{a}_{i2}''}[\mathsf{v}]$)); // Fused Mul and SUB
//$\,\mathbb{R}\colon\Re{\mathbf{a}_{i2}'''}[\mathsf{v}]$ = XOR($\Re{\mathbf{d}_{ii}^{}}[\mathsf{v}]$, $\Re{\mathbf{a}_{i2}''}[\mathsf{v}]$), $\Re{\mathbf{a}_{i2}'''}[\mathsf{v}]=\Re{\mathbf{d}_{ii}^{\ast}}[\mathsf{v}]\cdot\Re{\mathbf{a}_{i2}''}[\mathsf{v}]$ from $\text{\eqref{e:10}}$
$\Re{\mathbf{a}_{i1}'''}[\mathsf{v}]$ = $|\mathbf{a}_{i1}''|[\mathsf{v}]$; // assume $\Im{\mathbf{a}_{i1}'''}[\mathsf{v}]$ = SETZERO(); i.e., $\mathbf{0}$
  \end{lstlisting}
  \caption{Vectorized computation of $\mathbf{D}^{\ast}$ and $\mathbf{A}'''$ from~\eqref{e:4} or~\eqref{e:10}.}
  \label{l:7}
\end{figure}

The matrices $\mathbf{D}^{\ast}$ are unitary and diagonal, so
$\|\mathbf{a}_j'''\|_F^{}=\|\mathbf{a}_j''\|_F^{}$ can be (and is)
assumed, though numerically they might slightly differ, should the
former be recomputed, due to the rounding errors accumulated in the
course of the transformations of the elements of $\mathbf{A}''$ as in
Fig.~\ref{l:7}, as well as due to the recomputation itself.

Fig.~\ref{l:8} shows how to get the QR factorizations
from~\eqref{e:6}, i.e.,
$\mathbf{R}''=\mathbf{Q}_{\bm{\alpha}}^{\ast}\mathbf{A}'''$.  Only
$\mathbf{r}_2''$ has to be computed by multiplying  $\mathbf{a}_2'''$
(complex in the general case) from the left by the real
$\mathbf{Q}_{\bm{\alpha}}^{\ast}$, while $\mathbf{r}_1''$ is always
real and already known.  Should \texttt{INVSQRT} not be available,
there are two remedies, both starting from
$\sec\alpha=\sqrt{1+\tan^2\alpha}$.  The first, faster one computes
$\cos\alpha=1/\sec\alpha$, while the second, possibly more accurate
one due to requiring one rounding less than the
first~\cite{Novakovic-Singer-20}, does not require the cosine at all,
and instead replaces all multiplications by it with divisions by the
secant.
\begin{figure}[h!btp]
  \begin{lstlisting}
$\mathbf{r}_{11}^{}[\mathsf{v}]$ = $\|\mathbf{a}_1'''\|_F^{}[\mathsf{v}]$; // from $\text{\eqref{e:6}}$, $\mathbf{r}_{21}^{}[\mathsf{v}]$ = $\mathbf{0}$ is assumed but not set
$-\tan{\bm{\alpha}}[\mathsf{v}]$ = MAX(DIV($\Re{\mathbf{a}_{21}'''}[\mathsf{v}]$, $\Re{\mathbf{a}_{11}'''}[\mathsf{v}]$), $\mathbf{0}$); // from $\text{\eqref{e:7}}$
$\cos{\bm{\alpha}}[\mathsf{v}]$ = INVSQRT(FMADD($\tan{\bm{\alpha}}[\mathsf{v}]$, $\tan{\bm{\alpha}}[\mathsf{v}]$, $\mathbf{1}$)); // from $\text{\eqref{e:7}}$
$\Re{\mathbf{r}_{12}'}[\mathsf{v}]$ = MUL($\cos{\bm{\alpha}}[\mathsf{v}]$, FMADD($-\tan{\bm{\alpha}}[\mathsf{v}]$, $\Re{\mathbf{a}_{22}'''}[\mathsf{v}]$, $\Re{\mathbf{a}_{12}'''}[\mathsf{v}]$)); // similarly $\Im{\mathbf{r}_{12}'}[\mathsf{v}]$
// $\tan{\bm{\alpha}}[\mathsf{v}]$ implicit in Fused_Negative_Multiply-ADD($a,b,c$) = $-(a\cdot b)+c$
$\Re{\mathbf{r}_{22}''}[\mathsf{v}]$ = MUL($\cos{\bm{\alpha}}[\mathsf{v}]$, FNMADD($-\tan{\bm{\alpha}}[\mathsf{v}]$, $\Re{\mathbf{a}_{12}'''}[\mathsf{v}]$, $\Re{\mathbf{a}_{22}'''}[\mathsf{v}]$)); // $\Im{\mathbf{r}_{22}''}[\mathsf{v}]$ likewise
  \end{lstlisting}
  \caption{The vectorized QR factorization of $\mathbf{A}'''$ from~\eqref{e:6} and~\eqref{e:7}.}
  \label{l:8}
\end{figure}

Now $\mathbf{r}_{12}'$ has to be made real and non-negative by
multiplying $\mathbf{R}''$ by $\mathbf{\widetilde{D}}$ from the right,
obtaining $\mathbf{R}'$ as in~\eqref{e:8}, and then $\mathbf{r}_{22}'$
has to undergo a similar procedure by multiplying $\mathbf{R}'$ from
the left by $\mathbf{\widehat{D}}^{\ast}$, as in~\eqref{e:9}, to get
the real and non-negative $\mathbf{R}$.  The first step involves one
complex multiplication per lane,
$\mathbf{r}_{22}'=\mathbf{r}_{22}''\cdot\mathbf{\tilde{d}}_{22}^{}$,
while $\mathbf{r}_{12}^{}=|\mathbf{r}_{12}'|$, and the second step
involves none, since $\mathbf{r}_{22}^{}=|\mathbf{r}_{22}'|$, as shown
in Fig.~\ref{l:9}.
\begin{figure}[h!btp]
  \begin{lstlisting}
$\mathbf{r}_{12}^{}[\mathsf{v}]$ = $|\mathbf{r}_{12}'|[\mathsf{v}]$ = HYPOT($\Re{\mathbf{r}_{12}'}[\mathsf{v}]$, $\Im{\mathbf{r}_{12}'[\mathsf{v}]}$); //$\,\mathbb{R}\colon\mathbf{r}_{12}^{}[\mathsf{v}]$ = ANDNOT($-\mathbf{0}$, $\Re{\mathbf{r}_{12}'}[\mathsf{v}]$)
$\Re{\mathbf{\tilde{d}}_{22}^{\ast}}$ = OR(MIN(DIV(ANDNOT($-\mathbf{0}$, $\Re{\mathbf{r}_{12}'}[\mathsf{v}]$), $|\mathbf{r}_{12}'|[\mathsf{v}]$), $\mathbf{1}$), AND($\Re{\mathbf{r}_{12}'}[\mathsf{v}]$, $-\mathbf{0}$)); // $\text{\eqref{e:8}}$
$\Im{\mathbf{\tilde{d}}_{22}^{\ast}}$ = DIV($\Im{\mathbf{r}_{12}'}[\mathsf{v}]$, MAX($|\mathbf{r}_{12}'|[\mathsf{v}]$, $\mathbf{m}$)); // from $\text{\eqref{e:8}}$, here and above using $\text{\eqref{e:1}}$
//$\,\mathbb{R}\colon\Re{\mathbf{d}_{22}^{}}[\mathsf{v}]$ = AND($\Re{\mathbf{r}_{12}'}[\mathsf{v}]$, $-\mathbf{0}$), from $\text{\eqref{e:10}}$, but only the sign bit (i.e., $\pm 0$)
$\Re{\mathbf{r}_{22}'}[\mathsf{v}]$ = FMADD($\Re{\mathbf{r}_{22}''}[\mathsf{v}]$, $\Re{\mathbf{\tilde{d}}_{22}^{\ast}}[\mathsf{v}]$, MUL($\Im{\mathbf{r}_{22}''}[\mathsf{v}]$, $\Im{\mathbf{\tilde{d}}_{22}^{\ast}}[\mathsf{v}]$)); // and below from $\text{\eqref{e:8}}$
$\Im{\mathbf{r}_{22}'}[\mathsf{v}]$ = FMSUB($\Im{\mathbf{r}_{22}''}[\mathsf{v}]$, $\Re{\mathbf{\tilde{d}}_{22}^{\ast}}[\mathsf{v}]$, MUL($\Re{\mathbf{r}_{22}''}[\mathsf{v}]$, $\Im{\mathbf{\tilde{d}}_{22}^{\ast}}[\mathsf{v}]$)); // Fused Mul and SUB
//$\,\mathbb{R}\colon\Re{\mathbf{r}_{22}'}[\mathsf{v}]$ = XOR($\Re{\mathbf{r}_{22}''}[\mathsf{v}]$, $\Re{\mathbf{d}_{22}^{}}[\mathsf{v}]$), from $\text{\eqref{e:10}}$, but faster than multiplication
$\mathbf{r}_{22}^{}[\mathsf{v}]$ = $|\mathbf{r}_{22}'|[\mathsf{v}]$ = HYPOT($\Re{\mathbf{r}_{22}'}[\mathsf{v}]$, $\Im{\mathbf{r}_{22}'[\mathsf{v}]}$); //$\,\mathbb{R}\colon\mathbf{r}_{22}^{}[\mathsf{v}]$ = ANDNOT($-\mathbf{0}$, $\Re{\mathbf{r}_{22}'}[\mathsf{v}]$)
$\Re{\mathbf{\hat{d}}_{22}^{}}$ = OR(MIN(DIV(ANDNOT($-\mathbf{0}$, $\Re{\mathbf{r}_{22}'}[\mathsf{v}]$), $|\mathbf{r}_{22}'|[\mathsf{v}]$), $\mathbf{1}$), AND($\Re{\mathbf{r}_{22}'}[\mathsf{v}]$, $-\mathbf{0}$)); // $\text{\eqref{e:9}}$
$\Im{\mathbf{\hat{d}}_{22}^{}}$ = DIV($\Im{\mathbf{r}_{22}'}[\mathsf{v}]$, MAX($|\mathbf{r}_{22}'|[\mathsf{v}]$, $\mathbf{m}$)); // $\Im{\mathbf{\hat{d}}_{22}^{\ast}}=-\Im{\mathbf{\hat{d}}_{22}^{}}$ implicitly; from $\text{\eqref{e:9}}$
//$\,\mathbb{R}\colon\Re{\mathbf{\hat{d}}_{22}^{}}[\mathsf{v}]$ = AND($\Re{\mathbf{r}_{22}'}[\mathsf{v}]$, $-\mathbf{0}$), from $\text{\eqref{e:10}}$, but the sign bit extraction only
  \end{lstlisting}
  \caption{Vectorized computation of $\mathbf{\widetilde{D}}$, $\mathbf{\widehat{D}}$, and $\mathbf{R}$ from~\eqref{e:8} and~\eqref{e:9}, or from~\eqref{e:10}.}
  \label{l:9}
\end{figure}
%
%
\section{Vectorized SVD of real upper-triangular matrices of order two}\label{s:7}
%
%
In Fig.~\ref{l:10} a vectorization of the $2\times 2$ SVD method from
subsection~\ref{ss:2.3} for non-negative upper triangular matrices is
shown.  If $\mathtt{T}\ne\mathtt{double}$, the precomputed upper bound
for $\tan(2\varphi)$ should be replaced by the appropriate one (e.g.,
for $\mathtt{T}=\mathtt{float}$, $\sqrt{\mathtt{FLT\_MAX}}$ should be
used instead).  No sines are explicitly computed here, unlike in the
LAPACK's \texttt{DLASV2} routine, but could be, as
$\sin\beta=\cos\beta\cdot\tan\beta$, for $\beta\in\{\varphi,\psi\}$.
\begin{figure}[h!btp]
  \begin{lstlisting}
$\mathbf{f}$ = SET1(1.34078079299425956E+154); // $\sqrt{\mathtt{DBL\_MAX}}$ from $\text{\eqref{e:14}}$
$\mathbf{x}[\mathsf{v}]$ = MAX(DIV($\mathbf{r}_{12}[\mathsf{v}]$, $\mathbf{r}_{11}[\mathsf{v}]$), $\mathbf{0}$); $\mathbf{y}[\mathsf{v}]$ = MAX(DIV($\mathbf{r}_{22}[\mathsf{v}]$, $\mathbf{r}_{11}[\mathsf{v}]$), $\mathbf{0}$); // see $\text{\eqref{e:13}}$
$\tan(\mathbf{2}\bm{\varphi})[\mathsf{v}]$ = OR(MIN(MAX(DIV(MUL(SCALEF(MIN($\mathbf{x}[\mathsf{v}]$, $\mathbf{y}[\mathsf{v}]$), $\mathbf{1}$), MAX($\mathbf{x}[\mathsf{v}]$, $\mathbf{y}[\mathsf{v}]$)),
       /* from $\text{\eqref{e:14}}$ */ FMADD(SUB($\mathbf{x}[\mathsf{v}]$, $\mathbf{y}[\mathsf{v}]$), ADD($\mathbf{x}[\mathsf{v}]$, $\mathbf{y}[\mathsf{v}]$), $\mathbf{1}$)), $\mathbf{0}$), $\mathbf{f}$), $-\mathbf{0}$);
$\tan{\bm{\varphi}}[\mathsf{v}]$ = DIV($\tan(\mathbf{2}\bm{\varphi})[\mathsf{v}]$, ADD($\mathbf{1}$, SQRT(FMADD($\tan(\mathbf{2}\bm{\varphi})[\mathsf{v}]$, $\tan(\mathbf{2}\bm{\varphi})[\mathsf{v}]$, $\mathbf{1}$))));
$\sec^2{\bm{\varphi}}[\mathsf{v}]$ = FMADD($\tan{\bm{\varphi}}[\mathsf{v}]$, $\tan{\bm{\varphi}}[\mathsf{v}]$, $\mathbf{1}$); $\cos{\bm{\varphi}}[\mathsf{v}]$ = INVSQRT($\sec^2{\bm{\varphi}}[\mathsf{v}]$); // see $\text{\eqref{e:15}}$
$\tan{\bm{\psi}}[\mathsf{v}]$ = FMSUB($\mathbf{y}[\mathsf{v}]$, $\tan{\bm{\varphi}}[\mathsf{v}]$, $\mathbf{x}[\mathsf{v}]$); $\sec^2{\bm{\psi}}[\mathsf{v}]$ = FMADD($\tan{\bm{\psi}}[\mathsf{v}]$, $\tan{\bm{\psi}}[\mathsf{v}]$, $\mathbf{1}$);
$\cos{\bm{\psi}}[\mathsf{v}]$ = INVSQRT($\sec^2{\bm{\psi}}[\mathsf{v}]$); // from $\text{\eqref{e:16}}$ here and above
$\mathbf{c}_{\bm{\varphi}}^{\bm{\psi}}[\mathsf{v}]$ = MUL($\cos{\bm{\varphi}}[\mathsf{v}]$, $\cos{\bm{\psi}}[\mathsf{v}]$); $\bm{\sigma}_{\max}'[\mathsf{v}]$ = MUL(MUL($\mathbf{c}_{\bm{\varphi}}^{\bm{\psi}}[\mathsf{v}]$, $\sec^2{\bm{\psi}}[\mathsf{v}]$), $\mathbf{r}_{11}^{}$);
$\bm{\sigma}_{\min}'[\mathsf{v}]$ = MUL(MUL($\mathbf{c}_{\bm{\varphi}}^{\bm{\psi}}[\mathsf{v}]$, $\sec^2{\bm{\varphi}}[\mathsf{v}]$), $\mathbf{r}_{22}^{}$); // from $\text{\eqref{e:17}}$ here and above
  \end{lstlisting}
  \caption{Vectorization of the $2\times 2$ SVD of a non-negative upper-triangular matrix from \eqref{e:13}--\eqref{e:17}.}
  \label{l:10}
\end{figure}

A computation functionally similar to the one proposed in
Fig.~\ref{l:10} could be performed by \texttt{S} calls to
\texttt{DLASV2}.  In the Fortran syntax, one such call looks like
\begin{displaymath}
  \text{\texttt{CALL DLASV2}}(\mathtt{F}^{[k]},\,\mathtt{G}^{[k]},\,\mathtt{H}^{[k]},\,\mathtt{SSMIN}^{[k]},\,\mathtt{SSMAX}^{[k]},\,\mathtt{SNR}^{[k]},\,\mathtt{CSR}^{[k]},\,\mathtt{SNL}^{[k]},\,\mathtt{CSL}^{[k]}),
\end{displaymath}
\addtocounter{equation}{-1}
where $k$ lies in the range~\eqref{e:30}, for a given $\mathsf{v}$.
The input-only arguments are
\begin{displaymath}
  \mathtt{F}^{[k]}=\mathbf{r}_{11}^{[k]},\quad
  \mathtt{G}^{[k]}=\mathbf{r}_{12}^{[k]},\quad
  \mathtt{H}^{[k]}=\mathbf{r}_{22}^{[k]},
\end{displaymath}
\addtocounter{equation}{-1}
while the outputs are related to the quantities computed or implied in
Fig.~\ref{l:10} as
\begin{displaymath}
  \begin{aligned}
    \cos{\bm{\varphi}}[\mathsf{v}]&=(\mathtt{CSL}^{[k]})_k^{},\\
    -\sin{\bm{\varphi}}[\mathsf{v}]&=(\mathtt{SNL}^{[k]})_k^{},
  \end{aligned}\quad
  \begin{aligned}
    \cos{\bm{\psi}}[\mathsf{v}]&=(\mathtt{CSR}^{[k]})_k^{},\\
    -\sin{\bm{\psi}}[\mathsf{v}]&=(\mathtt{SNR}^{[k]})_k^{},
  \end{aligned}\quad
  \begin{aligned}
    \bm{\sigma}_{\max}'[\mathsf{v}]&=(\mathtt{SSMAX}^{[k]})_k^{},\\
    \bm{\sigma}_{\min}'[\mathsf{v}]&=(\mathtt{SSMIN}^{[k]})_k^{}.
  \end{aligned}
\end{displaymath}
\addtocounter{equation}{-1}

It is inadvisable to replace the vectorized algorithm in
Fig.~\ref{l:10} by the \texttt{DLASV2} calls, for at least two
reasons.  First, the input vectors have to be stored from the
registers to the addressable memory.  Then \texttt{S} function calls
have to be made instead of a single pass over the data, and the
results finally have to be loaded from the memory into the vector
registers for the last phase of the algorithm.  Second, throughout the
paper the tangents are used instead of the sines, to increase the
accuracy by reducing the number of the roundings performed due to
more opportunities for employing the $\fma$-type
operations~\cite{Drmac-97}.  However, \texttt{DLASV2} provides the
tangents only implicitly, as $\tan\beta=\sin\beta/\cos\beta$ for
$\beta\in\{\varphi,\psi\}$.  If the last phase of the algorithm comes
after the \texttt{DLASV2} calls, \eqref{e:29} has to be rewritten in
the terms of the respective sines to avoid the superfluous divisions,
as the equivalent expressions
\begin{equation}
  \begin{aligned}
    U&=P_r
    \begin{bmatrix}
      d_{11}^{}(\cos\alpha\cos\varphi-\hat{d}_{22}\sin\alpha\sin\varphi) & d_{11}^{}(\cos\alpha\sin\varphi+\hat{d}_{22}^{}\sin\alpha\cos\varphi)\\
      -d_{22}^{}(\sin\alpha\cos\varphi+\hat{d}_{22}\cos\alpha\sin\varphi) & d_{22}^{}(\hat{d}_{22}^{}\cos\alpha\cos\varphi-\sin\alpha\sin\varphi)
    \end{bmatrix},\\
    V&=P_c
    \begin{bmatrix}
      \cos\psi & \sin\psi\\
      -\tilde{d}_{22}^{}\sin\psi & \tilde{d}_{22}^{}\cos\psi
    \end{bmatrix}.
  \end{aligned}
  \label{e:31}
\end{equation}

A quick test (albeit with computing $\tan\beta$ by two vector
divisions for simplicity) has shown that an algorithm that calls
\texttt{DLASV2} as described is noticeably slower, more so in the real
than in the complex (more involved in the other phases) case,
relatively to the timings of the fully vectorized algorithm.  Using
\texttt{DLASV2} instead of the method in Fig.~\ref{l:10} might
therefore be a viable alternative only in the pointwise case, within a
routine (e.g., \texttt{x1svd2}) designed in the LAPACK's style.
%
%
\section{Numerical testing}\label{s:8}
%
%
All testing was performed on an Intel Xeon Phi 7210 CPU, running at
$1.3$~GHz with TurboBoost turned off in Quadrant cluster mode, with
$96$~GiB of RAM and $16$~GiB of flat-mode MCDRAM (that was not used,
since it is not available on the more recent generations of the Intel
CPUs), under $64$-bit CentOS Linux $7.8.2003$.

The Intel C compiler, \texttt{icc} (version 19.1.1.217), was invoked
with the following optimization and floating-point options:
\texttt{-O3 -xHost -qopt-zmm-usage=high -fp-model source -no-ftz -prec-div -prec-sqrt -fimf-precision=high},
to enable the gradual underflow, prohibit the aggressive
floating-point optimizations that could result in a loss of precision,
and, together with \texttt{-fimf-use-svml=true}, to employ the
high-accuracy SVML library.  Among other options were
\texttt{-std=c18} and \texttt{-qopenmp}.  The \texttt{DLASV2} routine
was provided by the sequential Intel Math Kernel Library (MKL).  The
quadruple precision floating-point arithmetic, used for the error
checking only, was supported by the \texttt{\_\_float128} datatype and
the functions operating on it, e.g., \texttt{\_\_fmaq},
\texttt{\_\_hypotq}, and \texttt{\_\_scalbq}, with the obvious
semantics.

The test data was harvested from \texttt{/dev/urandom} pseudorandom
byte stream, with an approximately uniform probability distribution of
each bit, and 64 consecutive bits formed a \texttt{double} precision
value.  A value that was not finite (either $\pm\infty$ or a
\texttt{NaN}) was replaced with another one that was, sourced in the
same way.  In total $2^{36}$ finite doubles were stored in a binary
file and reused for all runs.  In the real case, the file layout was
assumed to be a sequence of records, where each record contained four
vectors (i.e., all the elements of $\texttt{S}$ $2\times 2$ input
matrices), as
\begin{displaymath}
  \Re{\mathbf{a}_{11}}[\mathsf{v}],\Re{\mathbf{a}_{21}}[\mathsf{v}],\Re{\mathbf{a}_{12}}[\mathsf{v}],\Re{\mathbf{a}_{22}}[\mathsf{v}],
\end{displaymath}
\addtocounter{equation}{-1}
while in the complex case eight vectors were assumed per each record,
as
\begin{displaymath}
  \Re{\mathbf{a}_{11}}[\mathsf{v}],\Im{\mathbf{a}_{11}}[\mathsf{v}],\Re{\mathbf{a}_{21}}[\mathsf{v}],\Im{\mathbf{a}_{21}}[\mathsf{v}],\Re{\mathbf{a}_{12}}[\mathsf{v}],\Im{\mathbf{a}_{12}}[\mathsf{v}],\Re{\mathbf{a}_{22}}[\mathsf{v}],\Im{\mathbf{a}_{22}}[\mathsf{v}].
\end{displaymath}
\addtocounter{equation}{-1}

In both cases, a single batch comprised $n=\hat{n}=2^{28}$
matrices---an absurdly large number for a typical usage scenario in
the $2n\times 2n$ SVD algorithm, but necessary for a reliable timing
of each batch, to compensate for the unavoidable operating system's
jitter.  Therefore, in the real case the test file contained 64
batches, and the same file provided 32 batches in the complex case.
Execution of each batch by the parallel for loop from Fig.~\ref{l:2}
with 32 OpenMP threads, spread across the 64 CPU cores such that each
thread was affinity-bound to its own core while no two threads shared
the same level-2 cache, resulted in the following summary outputs:
\begin{compactenum}[1.]
  \item the wall time $t$ in seconds for processing the entire batch,
    measured by placing the \texttt{omp\_get\_wtime} calls immediately
    before and after the aforesaid parallel loop,
  \item the maximal \emph{a posteriori\/} spectral condition number of
    the matrices in the batch,
    \begin{equation}
      \kappa=\max_{1\le k\le n}\mathop{\kappa_2^{}}(\mathbf{A}^{[k]})=\fmin(\sigma_{\max}^{[k]}/\sigma_{\min}^{[k]},\infty),
      \label{e:32}
    \end{equation}
  \item the maximal normwise relative error of the singular value
    decompositions, as
    \begin{equation}
      \rho=\max_{1\le k\le n}\fmax\big(\|\mathbf{U}^{[k]}\mathbf{\Sigma}^{[k]}(\mathbf{V}^{[k]})^{\ast}-\mathbf{A}^{[k]}\|_F/\|\mathbf{A}^{[k]}\|_F,0\big),
      \label{e:33}
    \end{equation}
  \item the maximal departure from orthogonality of the left singular
    vectors, as
    \begin{equation}
      \delta=\max_{1\le k\le n}\|(\mathbf{U}^{[k]})^{\ast}\mathbf{U}^{[k]}-I\|_F,
      \label{e:34}
    \end{equation}
  \item and the maximal departure from orthogonality of the right singular
    vectors, as
    \begin{equation}
      \eta=\max_{1\le k\le n}\|(\mathbf{V}^{[k]})^{\ast}\mathbf{V}^{[k]}-I\|_F.
      \label{e:35}
    \end{equation}
\end{compactenum}

The last four metrics above ($\kappa$, $\rho$, $\delta$, and $\eta$)
were computed after the batch had been entirely processed and the
output data had been converted to quadruple precision by the
value-preserving casts.  The results were then printed out by
rounding them first to the hardware's 80-bit extended datatype
(\texttt{long double}, with a negligible error), while the timings
were rounded to the nearest microsecond.

The pointwise algorithm was implemented in the real (\texttt{d1svd2})
and the complex (\texttt{z1svd2}) variant.  The first two phases of
the pointwise algorithm are arithmetically equivalent to those of the
vectorized one if the scalar $\hypot$ and $\invsqrt$ functions are
equivalent to the respective vector ones, in an arbitrary lane.  The
SVD of a non-negative upper-triangular matrix was performed in the
pointwise algorithm by a single \texttt{DLASV2} call (see
section~\ref{s:7}), and the subsequent formation of $U$ and $V$ was
done as in~\eqref{e:31}, to compare the accuracy (i.e., $\rho$,
$\delta$, and $\eta$) of such an approach with the one proposed for
the vectorized algorithm.  Also, the pointwise implementations in C
are as close as possible to the LAPACK-style Fortran routines that
could be written for this specific purpose of computing the general
$2\times 2$ SVD, without any of the overhead a call to an $m\times n$
SVD routine would necessarily incur, thus allowing a fair comparison
of the execution times, as follows.  Each call to \texttt{dSsvd2} or
\texttt{zSsvd2} was replaced by \texttt{S} calls (one for each of the
argument vectors' lanes) to \texttt{d1svd2} or \texttt{z1svd2},
respectively, and the rest of the testing code (i.e., the batch timing
and the error checking parts) was left intact.  The speedup is a ratio
of the wall time required for processing a batch with the pointwise
algorithm so employed and the wall time required for the same job
using the vectorized algorithm as proposed.

The maximal condition number $\kappa$ from~\eqref{e:32} attained in
the real case varies from one batch to another, from
$4.447666\cdot 10^{617}$ to $2.134020\cdot 10^{620}$, and in the
complex case from $6.545644\cdot 10^{614}$ to
$9.167483\cdot 10^{616}$, so in each batch there was at least one
almost as highly ill-conditioned matrix as possible, without being
exactly singular.

Fig.~\ref{f:11} shows the attained speedup.  In the real case, the
wall times $t$ vary from $5.767176\,\mathrm{s}$ to
$5.793884\,\mathrm{s}$ for the vectorized algorithm, and from
$21.611207\,\mathrm{s}$ to $21.993982\,\mathrm{s}$ for the pointwise
one.  In the complex case, the ranges are $18.844251\,\mathrm{s}$ to
$19.234680\,\mathrm{s}$, and $54.771144\,\mathrm{s}$ to
$55.227508\,\mathrm{s}$, respectively.  Being somewhat more (in the
real case) or less (in the complex case) than three times, the speedup
has thus justified the purpose of designing the vectorized algorithm,
but also suggests that the pointwise algorithm should be used instead
when $n$ equals one or two.
\begin{figure}[h!btp]
  \begin{center}
    \includegraphics[keepaspectratio]{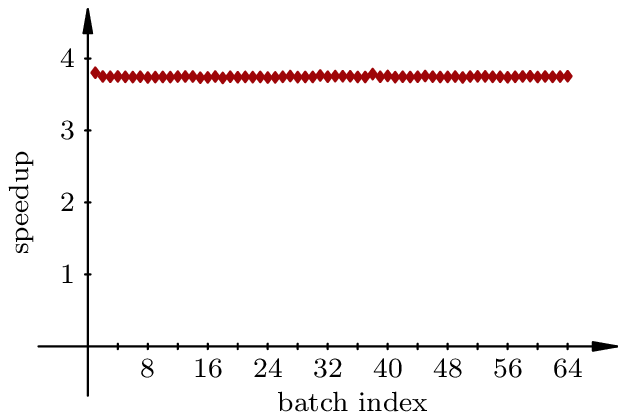}\hfill
    \includegraphics[keepaspectratio]{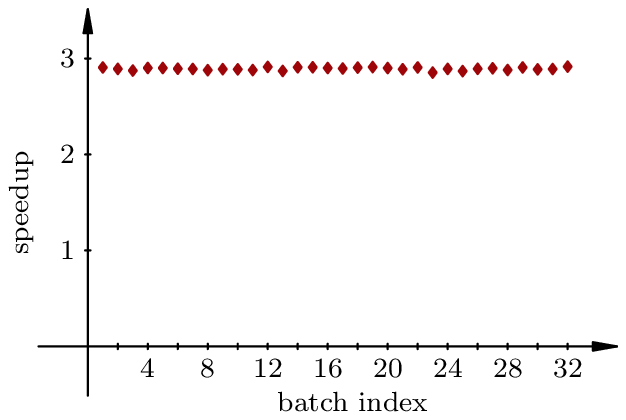}
  \end{center}
  \caption{The speedup attained with each batch processed by the
    vectorized algorithm (\texttt{xSsvd2}) vs.\ the pointwise
    algorithm (\texttt{x1svd2}), in the real ($\mathtt{x}=\mathtt{d}$,
    left) and the complex ($\mathtt{x}=\mathtt{z}$, right) case.}
  \label{f:11}
\end{figure}

Fig.~\ref{f:12} shows that for the vast majority of batches, the
vectorized algorithm gives a bit more accurate decomposition than the
pointwise one, but both are usable.  The optional backscaling was
proven to be unsafe, since in each batch at least one singular value
overflowed when backscaled, with either algorithm and in either case.
\begin{figure}[h!btp]
  \begin{center}
    \includegraphics[keepaspectratio]{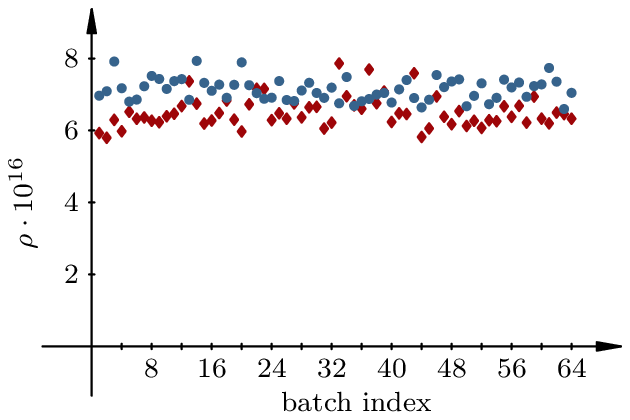}\hfill
    \includegraphics[keepaspectratio]{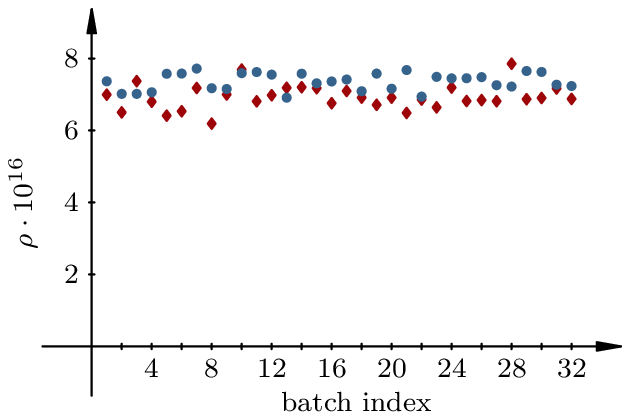}
  \end{center}
  \caption{The maximal normwise relative errors $\rho$
    from~\eqref{e:33} for each batch processed by the vectorized
    ({\color{vec}$\blacklozenge$}) and the pointwise
    ({\color{ptw}$\bullet$}) algorithm, in the real (left) and
    the complex (right) case.}
  \label{f:12}
\end{figure}

Figs.~\ref{f:13} and~\ref{f:14} demonstrate that the vectorized
algorithm generally results in the more orthogonal left and right
singular vectors, respectively, than the pointwise one, since there is
less rounding involved in computing the vectors in the former.
\begin{figure}[h!btp]
  \begin{center}
    \includegraphics[keepaspectratio]{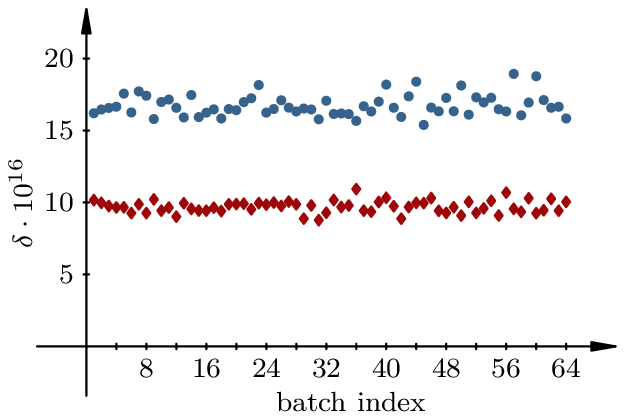}\hfill
    \includegraphics[keepaspectratio]{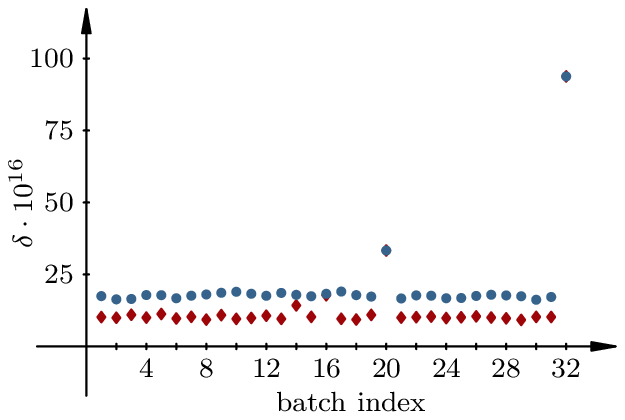}
  \end{center}
  \caption{The maximal departures from orthogonality $\delta$
    from~\eqref{e:34} for each batch processed by the vectorized
    ({\color{vec}$\blacklozenge$}) and the pointwise
    ({\color{ptw}$\bullet$}) algorithm, in the real (left) and
    the complex (right) case.}
  \label{f:13}
\end{figure}
\begin{figure}[h!btp]
  \begin{center}
    \includegraphics[keepaspectratio]{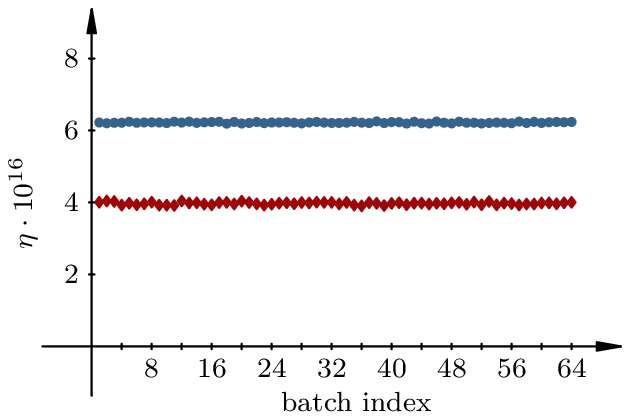}\hfill
    \includegraphics[keepaspectratio]{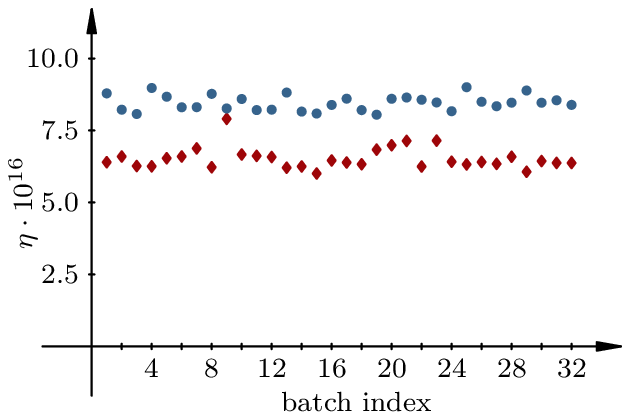}
  \end{center}
  \caption{The maximal departures from orthogonality $\eta$
    from~\eqref{e:35} for each batch processed by the vectorized
    ({\color{vec}$\blacklozenge$}) and the pointwise
    ({\color{ptw}$\bullet$}) algorithm, in the real (left) and
    the complex (right) case.}
  \label{f:14}
\end{figure}
%
%
\section{Conclusions}\label{s:9}
%
%
This paper has shown that a batched computation of the SVDs of order
two can be vectorized with a relative ease on the Intel AVX-512
architecture.  Other vectorization platforms might be targeted as
well, if they provide the instructions analogous to those required
here.  Single precision could be used instead of double precision.

Compared to the pointwise processing of one matrix at a time, the
vectorized algorithm is nearly or more than three times faster, in the
complex and the real case, respectively, and generally slightly more
accurate when computing the SVDs of non-negative upper-triangular
matrices as proposed versus the standard procedure of \texttt{DLASV2}.
Additionally, the exact scalings of the input matrices ensure that the
scaled singular values never overflow, provided that the input
elements are all finite.

A similar vectorization principle, relying on a vector-friendly data
layout (see section~\ref{s:3}), could be applied to the various
algorithms for factorizations or decompositions of a batch of matrices
of a small, fixed order, if the control flow within the algorithm can
be transformed into a branch-free one, as it has been done here for
the column and the row pivotings in section~\ref{s:6}, and for
handling the special values.
%
%
\section*{Acknowledgements}
%
%
This work has been supported in part by Croatian Science Foundation
under the project IP--2014--09--3670
``Matrix Factorizations and Block Diagonalization Algorithms''
(MFBDA).

The author would like to thank Sanja Singer for her assistance in
preparation of Figs.~\ref{f:11}, \ref{f:12}, \ref{f:13},
and~\ref{f:14} with \MP.
%
%

%
%
\end{document}